\let\expect\relax
\let\real\relax
\let\prob\relax
\theoremstyle{definition}
\newtheorem{theorem}{\bf Theorem}
\newtheorem{proposition}[theorem]{\bf Proposition}
\newtheorem{lemma}[theorem]{\bf Lemma}
\newtheorem{definition}{\bf Definition}
\newtheorem{remark}{\normalfont \textit{Remark}}
\newcommand{\pushright}[1]{\ifmeasuring@#1\else\omit\hfill$\displaystyle#1$\fi\ignorespaces}
\newcommand{\pushleft}[1]{\ifmeasuring@#1\else\omit$\displaystyle#1$\hfill\fi\ignorespaces}
\renewcommand{\d}[1]{\ensuremath{\operatorname{d}\!{#1}}}
\renewcommand{\d}[1]{\ensuremath{\operatorname{d}\!{#1}}}
\let\triangleq\relax
\newcommand{\triangleq}{\stackrel{\text{def}}{=}}
\newcommand{\real}{{\rm I\!R}}
\newcommand{\expect}{{\rm I\!E}}
\newcommand{\prob}{{\rm I\!P}}
\renewcommand{\d}[1]{\ensuremath{\operatorname{d}\!{#1}}}
\begin{document}
\title{On Online Energy Harvesting in Multiple Access Communication Systems}
\author{{Masoud~Badiei~Khuzani, \textit{Student Member}, \textit{IEEE}, and Patrick~Mitran, \textit{Senior Member}, \textit{IEEE}}\\
Department of Electrical and Computer Engineering \\
The University of Waterloo, Ontario, Canada}

\maketitle
\thispagestyle{empty}
{\let\thefootnote\relax\footnotetext{
This work was presented in part at the 2012 IEEE International Symposium on Information Theory \cite{Mitran}.

Email: \textsc{mbadieik@uwaterloo.ca, pmitran@uwaterloo.ca.}}
\setcounter{footnote}{0}

\begin{abstract}
We investigate performance limits of a multiple access communication system with energy harvesting nodes where the utility function is taken to be the long-term average sum-throughput. We assume a causal structure for energy arrivals and study the problem in the continuous time regime. For this setting, we first characterize a storage model that captures the dynamics of a battery with energy harvesting and variable transmission power. Using this model, we next establish an upper bound on the throughput problem as a function of battery capacity. We also formulate a non-linear optimization problem to determine optimal achievable power policies for transmitters. Applying a calculus of variation technique, we then derive Euler-Lagrange equations as necessary conditions for optimum power policies in terms of a system of coupled partial integro-differential equations (PIDEs). Based on a Gauss-Seidel algorithm, we devise an iterative algorithm to solve these equations. We also propose a fixed-point algorithm for the symmetric multiple access setting in which the statistical descriptions of energy harvesters are identical. Along with the analysis and to support our iterative algorithms, comprehensive numerical results are also obtained.
\end{abstract}

\begin{IEEEkeywords}
Energy harvesting, Multiple access communication, iterative algorithm.
\end{IEEEkeywords}

 \setcounter{equation}{0}

\section{Introduction}
The direct impact of energy on communication cost and lifetime has spurred significant efforts to manage and optimize energy consumption. In this respect, current and future state of the art technology has focused on harvesting energy from the environment. It is thus of paramount importance to design suitable adaptive power transmission policies for these technologies. In particular, the formulation of power policies in energy harvesting systems depends intricately on many factors, including energy arrival distribution, battery capacity, quality of service, \textit{etc}. Moreover, most renewable energy resources have unpredictable behaviour that makes the design process of optimal power policies difficult. Solar panels, for instance, can only scavenge sunlight during the daytime and even then, this is a function of weather conditions. Another example is thermoelectric generators where energy is absorbed based on random temperature gradients between two metal junctions. Regarding these examples, a key objective of recent studies is to engineer optimal transmission power polices. These studies, depending on causal or non-causal knowledge of future energy arrivals, fall within two major categories: offline or deterministic (for non-causal), and online or stochastic (for causal) energy harvesting systems.

In the offline regime and in terms of throughput maximization, optimal power allocation for different communication topologies has been well studied. For instance,  \cite{Yang} studies the multiple access channel (MAC), \cite{Ulukus} studies the broadcast channel, and the interference channel is studied in \cite{Yener}. In addition, the issue of maximizing throughput in a fading channel has been treated in \cite{Ozel}. There, a directional water-filling algorithm is proposed. In \cite{GG1}, a continuous time energy harvesting system with constant energy leakage rate due to battery imperfections is considered. Another interesting problem has been studied in \cite{Yang-Jing} where an offline energy harvesting problem subject to minimizing the transmission completion time is analyzed. Specifically, a continuous-time policy to minimize the delivery time of data packets is formulated.
Among more recent results in the offline setting is \cite{Gurakan} where energy cooperation in a two-hop communication system is considered.

 As an overview of prior works in the online regime, we refer the reader to \cite{Ozel}, \cite{capacityO}, \cite{Sharma}, and \cite{Ozel-Ulukus}. In \cite{Ozel} an algorithm in the offline problem of throughput maximization by a deadline was heuristically applied to the online counterpart. The authors have also considered a dynamic programming solution for online policies. However, the curse of dimensionality in the backward induction renders the computational cost of this approach very expensive. In \cite{capacityO}, the capacity of the additive white Gaussian noise channel (AWGN) under discrete-time energy arrivals and infinite battery capacity is characterized. Additionally, two achievable schemes based on save-and-transmit and best-effort-transmit are studied there. In \cite{Sharma}, queuing aspects of the online energy harvesting problem with infinite battery and buffer capacity have been considered. The authors have also suggested a greedy policy that in the low signal to noise ratio (SNR) regime is throughput optimal and attains minimum delay. A more relevant study related to the work presented here is \cite{Koksal}. Therein, Srivastava and Koksal have investigated an optimization problem where the objective is to maximize a utility function subject to causality and battery constraints. More interestingly, they addressed a trade-off between achieving the optimum utility and keeping the discharge rate low.

In this paper, we consider the online setting with continuous time policies in which the energy release rates are regulated dynamically based on the remaining charge of the battery at each moment. This architecture naturally requires a different mathematical framework in terms of modelling and analysis. Particularly, the main tool here for modelling the interaction between battery, energy arrivals, and energy consumption is a stochastic process known as a \textit{compound Poisson dam} model. This model was pioneered by Moran in 1954 \cite{Moran} and studied further by Gaver-Miller \cite{Gaver} and Harrison-Resnick \cite{Harrison}. In connection with this model, we derive an upper bound on the total sum-throughput of an online energy harvesting system. Also in terms of achievability, we construct an optimization problem to maximize the sum-throughput subject to an ergodicity constraint. This maximization problem turns out to be non-linear and analytically cumbersome. Relying on a calculus of variations approach, we subsequently find a system of simultaneous PIDEs as necessary conditions for an optimal power policy. We then propose a Gauss-Seidel method (see \cite{Ber99}) to solve these equations efficiently. In the symmetric case, when the statistical description of all the energy harvesters are identical, we obtain an alternative algorithm using a fixed point iteration method. Moreover, in the case of the point-to-point channel setting, the necessary condition further reduces to a non-linear, autonomous ordinary differential equation (ODE) that can be solved directly, using conventional numerical methods \cite{Mitran}.

The rest of the paper is organized as follows. In section II, we review some background, definitions, and notation. Section III deals with necessary and sufficient conditions for ergodicity of the storage process. In Section IV, we derive an upper bound as well as the achievability results for both finite and infinite storage cases, including two algorithms for the achievability part. These algorithms are then used to compute the numerical results in Section V. Lastly, in Section VI, we summarize our main findings and outline possible future directions.
\section{Preliminaries}
\label{section:Preliminaries}
\subsection{Communication model}
We consider $M$ multiple access transmission nodes that wish to transmit their data over a shared communication channel. Furthermore, each transmission node has an energy harvesting module and a battery to capture and store arriving energy packets. Throughout the paper, we denote the instantaneous transmission power at time $t$ from the $k^{th}$ node ($k=1,2,\cdots,M$) by $P_{k}(t)$. Also, to quantify the corresponding transmission rate of the nodes, we consider Shannon's rate function, $r(x)={1\over 2}\log_{2} \big(1+({x/N_{0}})\big)$, where $N_{0}$ denotes the noise power spectral density. In particular, Shannon's rate function carries the following properties and, unless stated otherwise, only these properties will be used in Section IV:
\begin{itemize}
\item[(A.1)] \textit{Positivity}: $r(x)> 0 \ \text{for all}\ x>0 \ \text{and}\ r(0)=0$.

\item[(A.2)] \textit{Differentiability}: $r(x)$ is three times continuously differentiable on $x\geq 0$.

\item[(A.3)] \textit{Monotone increasing}: $r'(x)> 0$ for all $x\geq 0$.

\item[(A.4)] \textit{Concavity}: $r''(x)< 0$ for all $x\geq 0$.
\end{itemize}
Letting $R_{k}$ denote the long-term average rate of the $k^{th}$ user, we then have the rate-region described by
\begin{align}
\label{MAC_region}
\sum_{k\in \mathcal{S}}R_{k}\leq  \lim_{T\rightarrow \infty} {1\over T}\int_{0}^{T}r\left(\sum_{k\in \mathcal{S}}P_{k}(s)\right)\ \text{d}s,
\end{align}
where the inequality holds for all subsets $\mathcal{S} \subseteq \{1,2,\cdots, M \}$, and the resulting region is a polytope called polymatroid. In this study, we restrict ourselves to the dominant face of this polymatroid (called permutahedron) that represents the total sum-throughput (or sum-rate) of the channel. Then, the sum-throughput is
\begin{align}
\label{MAC}
\sum_{k=1}^{M}R_{k}= \lim_{T\rightarrow \infty} {1\over T}\int_{0}^{T}r\left(\sum_{k=1}^{M}P_{k}(s)\right)\d s.
\end{align}
\subsection{Energy harvesting and storage model}
In our energy harvesting model, we allow the transmission nodes to use different techniques for harvesting exogenous energy. For example, while one node may collect solar energy, another node can use a thermoelectric generator. This mechanism is especially important for sensor networks where distributed terminals may measure miscellaneous targets that also feed sensors with energy (\textit{e.g.} see \cite{Sensor}). Mathematically, we assume that for each individual node $k\in \{1,2,\cdots, M \}$, energy is replenished into the corresponding battery according to specific energy arrivals $E_{k}^{0},E_{k}^{1},\cdots$, where the superscript denotes the order of arrivals. Furthermore, the energy arrivals for node $k$ are independent, identically distributed (\textit{i.i.d.}) according to $\prob\{E_{k} \leq x \}=B_{k}(x)$  which occur at random arrival times denoted by $T_{k}^{0},T_{k}^{1},\cdots$. The interarrival times $\Delta T_{k}^{n}=T_{k}^{n+1}-T_{k}^{n}$ are also assumed to be \textit{i.i.d.} and exponentially distributed. Therefore, the attributed point process, $N_{k}(t)\triangleq\sum_{n\in \mathbb{N}}\mathbf{1}_{\{T_{k}^{n}<t\}}$, is a homogeneous Poisson point process with intensity denoted by $\lambda_{k}$. Consequently, the total energy flow $E_{k}^{\mathrm{In}}(0,t]$ into node $k$ and up to time $t$ is a compound Poisson process,
\begin{align}
\label{energy input}
E_{k}^{\mathrm{In}}(0,t]\triangleq \sum_{i=0}^{N(t)}E_{k}^{i}.
\end{align}
To characterize the storage model, we also need to determine the output process at each transmitter. To do so, let $X_{k}(t)$ denote the energy stored in the $k$-th battery as a function of time. Then, the total energy expenditure until time $t$ is
\begin{subequations}
\begin{align}
E_{k}^{\mathrm{Out}}(t)&\triangleq \int_{0}^{t} P_{k}(s)\d s,\\ \label{energy output}
&=\int_{0}^{t} p_{k}(X_{k}(s))\d s,
\end{align}
\end{subequations}
where $p_{k}(\cdot)$ represents the transmission power policy of the $k$-th transmitter, modulated by the available energy in the battery.
Now, the storage equation in terms of the energy arrivals in Eq. \eqref{energy input} and the drift process in Eq. \eqref{energy output} is
\begin{align}
\label{storage model}
&X_{k}(t) = X_{k}(0)+ E_{k}^{\mathrm{In}}(0,t]-\int_{0}^{t} p_{k}(X_{k}(s))\d s,
\end{align}
where $X_{k}(0)$ is the initial battery reserve at time $t=0$, and here the battery is assumed to have infinite capacity $(X_{k}(t)\in [0,\infty))$. In the case that the $k$-th battery has a finite storage capacity, say $L_{k}$, then $X_{k}\in [0,L_{k}]$, and we can similarly characterize the following dynamics,
\small\begin{align}
\label{storage model1}
X_{k}(t) = X_{k}(0)+ E_{k}^{\mathrm{In}}(0,t]-\int_{0}^{t} p_{k}(X_{k}(s))\d s-Z_{k}(t),
\end{align}\normalsize
where $Z_{k}(t)$ is $\real^{+}$ valued process that is null at zero $(Z_{k}(0)=0)$, non-decreasing, continuous almost everywhere, and such that $\int_{\real^{+}}(L_{k}-X_{k}(s))\d Z_{k}(s)=0$. This process, known as reflection process \cite{Mazumdar}, ensures that for any energy arrival, the storage process remains inside the  boundary, \textit{i.e.}, $X_{k}(t)\in [0,L_{k}]$.
%
%

It is also interesting to note that the application of the structures in Eqs. \eqref{storage model} and \eqref{storage model1} are not limited to the current problem. In fact, this formulation has wide applicability in other fields of studies. Examples include workload modulated queues \cite{Workload-Modulated}, water reservoir dam analysis \cite{Asmussen}, food contaminants exposure in bioscience \cite{Food-contaminant}, \textit{etc}. In this paper, the ergodicity results of \cite{Asmussen} will be used and are summarized in section III.

\vspace{1mm}
\textit{Notation.} In the rest of the paper and for conciseness, we adopt several shorthand notations. In particular, $[M]$ stands for $\{1,2,\cdots , M\}$. For $M>1$, we define the rectangular domain $\mathcal{A}$ as
\begin{align}
\nonumber
\mathcal{A} \triangleq [0,L_{1}]\times [0,L_{2}]\times \cdots \times [0,L_{M}].
\end{align}
Related to this, we also define the $M$ dimensional integral by
\begin{align}
\nonumber
\int \limits_{0}^{L_{1}}\int \limits_{0}^{L_{2}}\cdots \int \limits_{0}^{L_{M}}(\cdot)\ \d x_{1}\d x_{2}\cdots \d x_{M},
\end{align}
which is represented by $\int_{\mathcal{A}} (\cdot)\ d\underline{x}$. For all subsets $\mathcal{S}\subseteq [M]$, we use $\mathcal{A}(\mathcal{S})$ to denote the projection of $\mathcal{A}$ onto the coordinates indexed by $\mathcal{S}$, \textit{i.e.},
\begin{align}
\nonumber
\mathcal{A}(\{1,3\})= [0,L_{1}]\times [0,L_{3}].
\end{align}
Then, $\int_{\mathcal{A}(\mathcal{S})} (\cdot)\ d\underline{x}$ denotes integration over a subset of $\real^{\vert \mathcal{S}\vert}$. $\mathcal{A}_{j}$ is also a shorthand for
\begin{align}
\nonumber
\mathcal{A}_{j} \triangleq [0,L_{1}]\times\cdots [0,L_{j-1}]\times [0,L_{j+1}] \cdots \times[0,L_{M}].
\end{align}

\section{Ergodic Theory of Storage Process}
We here summarize necessary and sufficient conditions for ergodicity of the storage process in Eq. \eqref{storage model}. Before stating the definitions regarding ergodic behaviour, we first put some mild constraints on the transmission policies. Particularly, for all $k=1,2,\cdots, M$,
\begin{enumerate}
\item $\forall L_{k}>0, 0< x_{k}\leq L_{k}\Rightarrow p_{k}(x_{k})>0 \ \text{and}\ p_{k}(0)=0, $
\item $\forall L_{k}>0, \sup \limits_{0<x_{k}\leq L_{k}} p_{k}(x_{k})<\infty.$
\end{enumerate}

The first condition indicates that as long as there is energy in the battery, transmission continues (otherwise, the battery would have a minimum energy reserve that can not be consumed). The second condition does not permit the energy in the battery to be consumed instantly. Regarding these constraints, we say a policy is admissible iff it fulfills these two conditions.

\begin{definition}(\textit{Hitting Time})
The hitting time, $\tau(x)$, is defined as the first time that the energy level in the battery reaches the value of $x$. More specifically,
\begin{align}
\nonumber
\tau(x)\triangleq \inf \{t\geq 0: X(t)=x \}.
\end{align}
\end{definition}
\begin{definition}
(\textit{Transient and Recurrent Process} {\cite[p. 290]{Asmussen}}) The storage process is said to be \textit{transient}, if and only if for all initial energy levels $x(0)$ in the battery, we have $\prob(X_{t}\rightarrow \infty)=1$. Alternatively, the storage process is said to be \textit{recurrent} if and only if $\prob[\tau(x)<\infty\vert x(0)]=1$, $\forall x>0,\ x(0)\geq 0$. In the case of a recurrent storage process, it is said to be \textit{positive recurrent} if it further satisfies $\expect[\tau(x)\vert x(0)=x]< \infty$ for one $x>0$ and therefore for all $x>0$ (irreducibility). Similarly, the recurrent storage process is \textit{null recurrent} if $\expect[\tau(x)\vert x(0)=x]= \infty$ for one $x>0$ and therefore for all $x>0$.
\end{definition}

 One motivation for surveying ergodic conditions is to rule out policies that result in transient and null recurrent battery behaviours. For example in the transient case $X(t)\rightarrow \infty$ a.s. which is unrealistic. Also, in the null recurrent case $\lim_{t\rightarrow \infty}\prob\{X(t)\leq u \vert x(0)=x\}= 0, \forall x,u\geq 0$ which implies an unbounded energy reserve in the battery.

\begin{theorem}(\textit{Ergodicity Condition} \cite[Thm. 3.6]{Asmussen})\ The storage process $\{X_{k}(t)\}_{t\geq 0}$ is positive recurrent if and only if there exist a probability measure $\pi_{k}$ that is absolutely continuous on $(0,\infty)$ and which may possess an atom at zero, $\pi_{k}^{0}=\pi_{k}(\{0\})$, \textit{i.e.},
\begin{align}
\label{dens-1}
\pi_{k}(x_{k})=\pi_{k}^{0}+\int_{0^{+}}^{x_{k}}f_{k}(v_{k})\d v_{k},
\end{align}
and such that
\begin{align}
\nonumber
f_{k}(x_{k})=&{\lambda_{k} \over p_{k}(x_{k})}\Big(\pi_{k}^{0}(1-B_{k}(x_{k})) \\ \label{density}
            & +\int_{0^{+}}^{x_{k}}(1-B_{k}(x_{k}-v_{k}))f_{k}(v_{k})\d v_{k} \Big).
\end{align}
Furthermore, $\pi_{k}$ is the unique stationary distribution of the process $X_{k}(t)$.
\hfill $\square$
\end{theorem}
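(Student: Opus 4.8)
\emph{Proof plan (following Asmussen \cite{Asmussen}).}
The plan is to view $\{X_k(t)\}_{t\ge 0}$ defined through \eqref{storage model} as a piecewise-deterministic Markov process: between energy arrivals it follows the deterministic flow $\dot X_k=-p_k(X_k)$, and at the epochs of the rate-$\lambda_k$ Poisson arrival process it jumps upward by an independent amount distributed as $B_k$. The admissibility conditions on $p_k$ (positive on $(0,L_k]$, vanishing at $0$, bounded) make the flow well defined, make the empty state $0$ a rest point of the drift, and keep the paths Lipschitz between jumps so that level-crossing times behave well. I would first invoke the classification recalled just before the theorem: for an irreducible Markov process, positive recurrence is equivalent to the existence of an invariant probability measure, which is then automatically unique. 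So it is enough to show that (i) any invariant probability law $\pi_k$ must have the form \eqref{dens-1}--\eqref{density}, and (ii) conversely, any probability solution of \eqref{density} is invariant, hence forces positive recurrence.

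For step (i) I would use a level-crossing (rate-conservation) argument. Fix a level $x_k>0$. In steady state the long-run rate of downcrossings of $x_k$ equals the long-run rate of upcrossings. Downcrossings occur only through the continuous drift, and along $\dot X_k=-p_k(X_k)$ the downcrossing rate of $x_k$ is exactly $p_k(x_k)f_k(x_k)$, where $f_k$ is the density of $\pi_k$ on $(0,\infty)$. Upcrossings occur only at jumps: conditioning on a pre-jump level $v_k<x_k$ --- possibly $v_k=0$, which carries the atom $\pi_k^0=\pi_k(\{0\})$ --- the path crosses $x_k$ iff the arriving packet exceeds $x_k-v_k$, i.e. with probability $1-B_k(x_k-v_k)$; and since the input is Poisson, by PASTA the pre-jump level is distributed according to $\pi_k$. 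Equating the two rates gives precisely \eqref{density}. The atom $\pi_k^0$ arises because $p_k(0)=0$ means the path is held at $0$ for an exponential holding time each time the battery empties; on $(0,\infty)$ the right-hand side of \eqref{density} is a bounded-variation function divided by the continuous positive $p_k$, which forces $\pi_k$ to be absolutely continuous there with density $f_k$, establishing \eqref{dens-1}. Equivalently, I could derive \eqref{density} analytically by imposing $\int \mathcal L_k g\,d\pi_k=0$ for test functions $g$ in the domain of the extended generator $\mathcal L_k g(x)=-p_k(x)g'(x)+\lambda_k\int_0^\infty\!\big(g(x+y)-g(x)\big)\,dB_k(y)$ and integrating by parts; the two routes lead to the same equation.

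For step (ii), reversing the integration by parts shows that a probability measure of the form \eqref{dens-1} satisfying \eqref{density} annihilates the generator, hence is invariant; its existence together with the irreducibility already noted upgrades recurrence to positive recurrence and identifies $\pi_k$ as the unique stationary law. It is also useful to note that \eqref{density} is a Volterra integral equation of the second kind for $f_k$ with kernel $\lambda_k(1-B_k(\cdot))/p_k(\cdot)$: given $\pi_k^0$ it has a unique solution, and $\pi_k^0$ is then fixed by the normalization $\pi_k^0+\int_{0^+}^{\infty} f_k(v_k)\,dv_k=1$. I expect the main obstacle to be the rigorous justification of the level-crossing identity --- proving that the up- and down-crossing rates are finite and are actually given by the stated expressions --- together with the clean separation between the positive-recurrent case, in which the Volterra solution can be normalized to unit total mass, and the null-recurrent and transient cases, in which it cannot; this is exactly where the boundedness and positivity of $p_k$ on $(0,L_k]$ and the regularity of the Poisson input enter, and both points are carried out in detail in \cite{Asmussen}.
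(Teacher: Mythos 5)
Your plan is correct in outline, but it is not the argument the paper attributes to Asmussen, and the comparison is worth making explicit. The paper proves nothing here itself: Theorem 1 is cited from \cite[Thm.~3.6]{Asmussen}, and Remark 1 describes the proof of the converse direction as resting on the embedded Markov chain $\{X_k(n)\}$ sampled at the arrival epochs, for which one shows that every storage interval $(x_k^0,x_k^1)$, $0<x_k^0<x_k^1$, is recurrent in the sense of Harris. The level-crossing balance that drives your argument appears in the paper only \emph{after} the theorem, as an interpretation of \eqref{density} via \eqref{reflecct}, and again in Appendix~A (through the rate conservation law and PASTA) for the different purpose of showing that $X_k(t)$ is a sufficient statistic. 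The trade-off is real: your rate-conservation/generator route produces the stationary equation \eqref{density} quickly and transparently, whereas the embedded-chain/Harris route is what actually supplies the irreducibility and the equivalence ``invariant probability law exists $\Leftrightarrow$ positive recurrent'' that your step (ii) merely invokes. Two soft spots in your sketch sit exactly where you lean on the reference. First, your justification of absolute continuity on $(0,\infty)$ is circular as written: you argue that the right-hand side of \eqref{density} divided by $p_k$ forces $\pi_k$ to have a density, but \eqref{density} is an equation \emph{for} that density and cannot be written down until the absence of a singular part is established by other means (this is what the embedded-chain analysis delivers). Second, saying the path is held at $0$ for an exponential holding time ``each time the battery empties'' presupposes that the empty state is reached in finite time, which requires $\int_{0}p_k(u)^{-1}\,du<\infty$ (cf.\ Remark 1); the theorem deliberately allows $\pi_k^0=0$ when this fails. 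Neither point is fatal --- both are resolved in \cite{Asmussen} --- but they are the places where a self-contained proof along your lines would still have real work to do.
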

\begin{remark}
The elegant proof of Assmussen for the converse part of Theorem 1 is based on an embedded Markov chain $\{X_{k}(n)\}$ at marked arrival times. In particular, for recurrent embedded chains, it is shown that any storage interval $(x_{k}^{0},x_{k}^{1}),0<x_{k}^{0}<x_{k}^{1}$ is recurrent in the sense of Harris. An alternative proof of the converse part of Theorem 1 adopts the additional condition $\int_{0}^{x_{k}}{(1/p_{k}(u))}\d u<\infty,\forall x_{k}>0$. Due to this extra condition, the required time to reach the zero state in the absence of new arrivals from any energy level in the battery must be finite. For this constraint, it can also be shown that $x_{k}=0$ is a regenerative recurrent point for the process and therefore, due to the additional constraint, the probability measure has a strict atom $\pi_{k}^{0}>0$ at zero.
\end{remark}
\begin{remark}
As discussed in \cite[p. 297]{Asmussen}, in the finite energy case ($L_{k}< \infty$), the storage process is always positive recurrent and the probability measure is likewise governed by Eqs. \eqref{dens-1} and \eqref{density}.
\end{remark}
\begin{remark}
We note that the atom of the probability measure $\pi_{k}(x_{k})$ corresponds to an absorbing state of the process $X_{k}(t)$ in the sense that upon $X_{k}(t)$ entering state $x_{k}=0$, the process remains there until an energy arrival occurs (at which point the process transits to another state). Based on this and the first constraint on admissible power policies (in particular $p_{k}(L_{k})>0$), there is no atom at $x_{k}=L_{k}$ in the finite case since it has a strictly negative drift in Eq. \eqref{storage model1} that shifts the process to the inner region of the state-space instantaneously, i.e., $x_{k}<L_{k}$. Therefore, the battery never idles with $x_{k}=L_{k}$ (reflecting boundary).
\end{remark}

An interpretation for the forward equation in Eq. \eqref{density} can be provided in terms of level crossing theory. In particular,
\begin{align}
\label{reflecct}
&f_{k}(x_{k})p_{k}(x_{k})= \\ \nonumber
&{\lambda_{k}}\left \{\pi_{k}^{0}\big(1-B_{k}(x_{k})\big)+\int_{0^{+}}^{x_{k}}\hspace{-1mm}\big(1-B_{k}(x_{k}-v_{k})\big)f_{k}(v_{k}) \d v_{k} \right \},
\end{align}
is the equilibrium condition between the rate of down crossing at level $x_{k}$ (the \textit{l.h.s} of Eq. \eqref{reflecct}) and up crossing at level $x_{k}$ (the \textit{r.h.s} of Eq. \eqref{reflecct}). We can also view \eqref{density} as a Volterra integral equation of the second kind with the kernel function $K(x_{k},v_{k})=1-B_{k}(x_{k}-v_{k})$, and it can thus be solved numerically (see \cite{PLinz}).

In this paper, we consider the energy arrivals $\{E_{k}^{i}\}_{i=0}^{\infty}, k=1,2,\cdots, M$, to be exponentially distributed with parameter $\zeta_{k}$. Therefore, we have
\begin{align}
\nonumber
K(x_{k},v_{k})=\exp(-\zeta_{k} (x_{k}-v_{k})),
\end{align}
that simplifies \eqref{density} to
\begin{align}
\label{Forward equation summarized form}
f_{k}(x_{k})={\lambda_{k}\exp (-\zeta_{k} x_{k}) \over p_{k}(x_{k})}G_{k}(x_{k}),
\end{align}
where
\begin{align}
\label{F}
G_{k}(x_{k})\triangleq \left(\pi_{0}+\int_{0^{+}}^{x_{k}}\exp (\zeta_{k} v_{k}) f_{k}(v_{k}) \ \d v_{k} \right).
\end{align}
\begin{remark}
\label{Remark:1}
The storage models in Eqs. \eqref{storage model} and \eqref{storage model1} are memoryless, in the sense that at each time instant $t$, the power policy $p_{k}$ only depends on the available charge $X_{k}(t)$ in the battery and not the entire sample path $\{X_{k}(s);s\leq t\}$. As an extension, we can also define a storage model with memory and infinite battery capacity as follows
\begin{align}
X_{k}(t)=X_{k}(0)+E_{k}^{\mathrm{In}}(0,t]-\int_{0}^{t}p_{k}(X_{k}(u);u\leq s)\d s.
\end{align}
The extension of the storage model with memory and finite battery capacity follows similarly. However, when the arrival process is Poisson, it can be shown that $X_{k}(t)$ is a sufficient statistic for an optimal power policy for both infinite and finite battery cases (see Appendix A). In this regard, knowledge of the entire path $\{X_{k}(s);s\leq t\}$ as an argument of $p_{k}(\cdot)$ is excessive.
\end{remark}

\section{Bounds on Total Average Throughput}
\label{Section:Bounds on Total Average Throughput}
Our objective now is to derive an upper bound on the average throughput as well as achievable policies with good performance. In connection with our system model, we will analyze a MAC with 1) finite, and 2) infinite storage batteries.

In particular, in the finite storage case, a good power policy must manage overflow in the battery as regular overflow causes energy waste and potentially decreases the sum throughput. To reduce overflow, the power policy must result in a large transmission power when the battery charge is large as otherwise overflow is likely to occur upon a new arrival. However, transmitting with \textit{too} large a transmission power when the battery happens to have large charge is also undesirable due to the concavity of the rate function. In other words, there is a tension between overflow and the rate at which the large battery charge is consumed to reduce overflow likelihood.

 To further clarify the latter point, consider an energy harvesting system with a single node ($M=1$) in which energy $E$ is replenished into a battery exactly every $T$ units of time. In addition, assume that the transmitter sends data by using a constant transmission power $P=E/(\alpha T), \alpha>0$.  Two cases can now be examined:

(\textit{i}) $\alpha>1$: In this case, the transmitter fails to consume the entire battery charge before the next arrival, and thus overflow occurs regularly. We then have
\begin{align}
T\times r\left({E\over\alpha T}\right)\leq T\times r\left({E\over T}\right).
\end{align}

(\textit{ii}) $\alpha<1$: In this case, the transmitter depletes its available battery charge within $\alpha T< T$ of each arrival. From the concavity of the rate function, we have the following inequality
\begin{align}
\alpha T\times r\left({E\over\alpha T}\right)\leq T\times r\left({E\over T}\right).
\end{align}

 Here, the tension between (\textit{i}) and (\textit{ii}) is resolved by the optimal choice of $\alpha=1$, \textit{i.e.}, $P=E/T$.
\subsection{An Upper Bound}
\subsubsection{Finite Storage Battery}
In this case  $L_{k}< \infty, \forall k\in [M]$. Then from \eqref{MAC} and due to ergodicity of the storage processes  $\{X_{k}(t)_{t\geq 0}\}_{k=1}^{M}$ in the finite battery case (ref. Remark 3), we have almost surely
\begin{align}
\sum_{k=1}^{M} R_{k}\stackrel{\text{a.s.}}{=}\expect\left[r\left(\sum_{k=1}^{M}p_{k}(X_{k})\right)\right],
\end{align}
where the expectation is with respect to the stationary distribution in Theorem 1. In addition, from the concavity property of the rate function and Jensen's inequality,
\begin{align}
\label{Converse_Jensen}
\expect\left[r\left(\sum_{k=1}^{M}p_{k}(X_{k})\right)\right] \leq  r\left(\sum_{k=1}^{M}\expect[p_{k}(X_{k})]\right).
\end{align}
It thus remains to bound the mean transmission power $\expect[p_{k}(x_{k})]$. This can be accomplished by integrating by parts as follows
\begin{align}
\expect[p_{k}(X_{k})] &=\pi_{k}^{0}p_{k}(0)+\int_{0^{+}}^{L_{k}}p_{k}(x_{k})f_{k}(x_{k})\d x_{k} \\
&\stackrel {\rm{(a)}}{=}\int_{0^{+}}^{L_{k}}p_{k}(x_{k})f_{k}(x_{k})\d x_{k}\\ \label{from_G}
&\stackrel{\rm{(b)}}{=}\lambda_{k}\int_{0^{+}}^{L_{k}}\exp (-\zeta_{k}x_{k})G_{k}(x_{k})\d x_{k}\\ \label{expectation}
&= -{\lambda_{k}\over \zeta_{k}}\exp (-\zeta_{k}x_{k})G_{k}(x_{k})\Big \vert_{0^{+}}^{L_{k}} \\ \nonumber
&\hspace{4mm}+ {\lambda_{k}\over \zeta_{k}}\int_{0^{+}}^{L_{k}}\exp (-\zeta_{k}x_{k})G'_{k}(x_{k})\d x_{k},
\end{align}
where $\rm{(a)}$ comes from the first constraint on the admissible power policies and $\rm{(b)}$ follows from \eqref{Forward equation summarized form}. Now from \eqref{F},
\begin{align}
\label{F_der}
G'_{k}(x_{k})=f_{k}(x_{k})\exp(\zeta_{k}x_{k}).
\end{align}
Also we note that $G_{k}(0^{+})=\pi_{k}^{0}$ and
\begin{align}
e^{-\zeta_{k}L_{k}}G_{k}(L_{k})&=e^{-\zeta_{k}L_{k}}\big(\pi_{k}^{0}+\int_{0^{+}}^{L_{k}}\hspace{-1mm}e^{\zeta_{k}x_{k}}f_{k}(x_{k})\d x_{k}\big)\\ \label{Eq:explaininequality}
&\stackrel{\rm{(c)}}{\geq} e^{-\zeta_{k}L_{k}}\big(\pi_{k}^{0}+\int_{0^{+}}^{L_{k}}f_{k}(x_{k})\d x_{k}\big)\\
 \label{integ-bypart-second}
&=e^{-\zeta_{k}L_{k}},
\end{align}
where inequality \rm{(c)} is due to the fact that $\exp(\zeta_{k}x_{k})\geq 1$ for all $x_{k}\geq 0$ since $\zeta_{k}>0$. Substituting \eqref{integ-bypart-second} and \eqref{F_der} in Eq. \eqref{expectation} yields
\begin{align}
\nonumber
\expect[p_{k}(X_{k})]&= {\lambda_{k}\over \zeta_{k}}\big(G_{k}(0^{+})-e^{\zeta_{k}L_{k}}G_{k}(L_{k})\big) \\
&\hspace{4mm}+ {\lambda_{k}\over \zeta_{k}}\int_{0^{+}}^{L_{k}}f_{k}(x_{k})\d x_{k},                  \\
&\leq {\lambda_{k}\over \zeta_{k}}\big(\pi_{k}^{0}-e^{-\zeta_{k}L_{k}}+\int_{0^{+}}^{L_{k}}f_{k}(x_{k})\d x_{k}\big)\\ \label{lambda_zeta} &={\lambda_{k}\over \zeta_{k}}(1-\exp(-\zeta_{k}L_{k})),
\end{align}
In the last step, we now use \eqref{lambda_zeta} and the non-decreasing property of the rate function to characterize an upper bound for all $L_{k}<\infty$ as follows
\begin{align}
\label{upper bound}
\sum_{k=1}^{M}R_{k} \leq r\Big(\sum_{k=1}^{M}{\lambda_{k}\over \zeta_{k}}(1-e^{-\zeta_{k}L_{k}})\Big)\triangleq R_{\text{upper}}.
\end{align}
\subsubsection{Infinite Storage Battery} We now take $L_{k}=\infty$. In this case, similar to \eqref{from_G} we can directly compute,
\begin{align}
&\expect[p_{k}(x_{k})]=\lambda_{k} \int_{0^{+}}^{\infty}e^{-\zeta_{k}x_{k}}G_{k}(x_{k})\d x_{k}\\ &=\lambda_{k} \int_{0^{+}}^{\infty}e^{-\zeta_{k}x_{k}}\Big(\pi_{k}^{0}+\int_{0^{+}}^{x_{k}}e^{\zeta_{k}v_{k}}f_{k}(v_{k}) \d v_{k} \Big)\d x_{k}\\  &={\lambda_{k}\over \zeta_{k}}\pi_{k}^{0} +\lambda_{k}\int_{0^{+}}^{\infty}\int_{0^{+}}^{x_{k}}e^{\zeta_{k}(v_{k}-x_{k})}f_{k}(v_{k})\d v_{k}\d x_{k}\\ &\stackrel {\rm{(a)}}{=}{\lambda_{k}\over \zeta_{k}}\pi_{k}^{0}+\lambda_{k}\int_{0^{+}}^{\infty}\int_{v_{k}}^{\infty}e^{\zeta_{k}(v_{k}-x_{k})}f_{k}(v_{k})\d x_{k}\d v_{k}\\
&={\lambda_{k}\over \zeta_{k}}\pi_{k}^{0}+{\lambda_{k}\over \zeta_{k}}\int_{0^{+}}^{\infty}f_{k}(v_{k})\d v_{k}\\ \label{average-power-transmission}
&={\lambda_{k}\over \zeta_{k}},
\end{align}
where in \rm{(a)}, we changed the order of integration. Thus, for positive recurrent policies and when all $L_{k}=\infty$, we have the following upper bound
\begin{align}
\label{upper bound 1}
\sum_{k=1}^{M}R_{k} \leq r\Big(\sum_{k=1}^{M}{\lambda_{k}\over \zeta_{k}}\Big).
\end{align}
\begin{remark}
In contrast with the inequality \eqref{upper bound} which only holds for positive recurrent transmission power policies, Eq. \eqref{upper bound 1} is valid for transient and null recurrent power policies as well.
In particular, in the infinite battery case,
\begin{align}
\nonumber
\lim_{T\rightarrow \infty}\dfrac{1}{T} \int_{0}^{T}p_{k}\big(X_{k}(t)\big)\d t \leq {\lambda_{k}/\zeta_{k}},
\end{align}
regardless of the type of power policy, and thus \eqref{upper bound 1} follows by concavity of the rate function. Nevertheless, the strict equality in Eq. \eqref{average-power-transmission} will be used to study transmission power policies that result in ergodic behavior for the infinite battery capacity case in Section \ref{subsection:Achievable allocation scheme}.
\end{remark}
\subsection{Achievable allocation scheme}
\label{subsection:Achievable allocation scheme}
To derive transmission power policies with good performance, we start with the ergodicity assumption and the definition of expectation, \textit{i.e.},
\begin{align}
\sum_{k=1}^{M}R_{k}&=\lim_{T\rightarrow \infty} {1\over T}\int_{0}^{T}r\Big(\sum_{k=1}^{M}P_{k}(s)\Big)\d s\\
&\stackrel{\text{a.s.}}{=} \label{12} \int_{\mathcal{A}}r\Big(\sum_{k=1}^{M}p_{k}(x_{k})\Big)\prod_{k=1}^{M}\pi_{k}(\d x_{k})\\
&\triangleq \widehat{R}\big(\{p_{k}(x_{k})\}_{k=1}^{M}\big),
\end{align}
where
\begin{align}
\pi_{k}(\d x_{k})=[\pi_{k}^{0}\delta(x_{k})+f_{k}(x_{k})]\d x_{k},
\end{align}
and $\delta(x_{k})$ denotes the Dirac delta function. We now aim to find achievable policies through the following optimization problem
\begin{subequations}
\begin{align}
 \label{constraint-feasibility}
&\sup_{\{\pi_{k}^{0},f_{k}(x_{k})\}_{k=1}^{M}}\int_{\mathcal{A}}r\Big(\sum_{k=1}^{M}p_{k}(x_{k})\Big)\prod_{k=1}^{M}\pi_{k}(\d x_{k}),\\ &{\text{s.t. :}}  \label{positivity-feasibility-begin}
f_{k}(x_{k})={\lambda_{k}e^{-\zeta_{k} x} \over p_{k}(x_{k})}\left(\pi_{k}^{0}+\int_{0^{+}}^{x_{k}}e^{-\zeta_{k}v}f_{k}(v) \d v \right),\\
&\hspace{6mm}\pi_{k}^{0}+\int_{0^{+}}^{L_{k}}f_{k}(x_{k})\d x_{k} = 1,\\ \label{positivity-feasibility}
&\hspace{6mm} \pi_{k}^{0}\geq 0, \ \ f_{k}(x_{k})\geq 0, \quad \forall k\in [M],
\end{align}
\end{subequations}
which maximizes the overall expected throughput of the multiple access channel subject to the stationary probability measure constraints of the batteries. However, tackling this non-linear optimization problem is challenging as the feasibility constraint in Eq. \eqref{positivity-feasibility-begin} is not in an explicit form. To circumvent this difficulty, we use a calculus of variations approach to transform the problem into a set of necessary conditions for an optimal solution. As a starting point, consider the following linear mappings
\begin{align}
g_{k}(x_{k})\triangleq f_{k}(x_{k})e^{\zeta_{k}x_{k}}, \ \ \ x_{k}>0,
\end{align}
that transforms the positive recurrent condition in Eq. \eqref{Forward equation summarized form} into
\begin{align}
g_{k}(x_{k})&={\lambda_{k}\over p_{k}(x_{k})}\left(\pi_{k}^{0}+\int_{0^{+}}^{x_{k}}g_{k}(v) \d v \right)\\ \label{G-p}
&= {\lambda_{k}\over p_{k}(x_{k})}G_{k}(x_{k}),
\end{align}
with $G_{k}(x_{k})= \Big(\pi_{k}^{0}+\int_{0^{+}}^{x_{k}}g_{k}(v) \d v \Big)$ as in Eq. \eqref{F}. Hence, \eqref{12} is valid with
\begin{align}
p_{k}(x_{k})&=
\begin{cases}
\label{13}
{\lambda_{k}G_{k}(x_{k})/ g_{k}(x_{k})} & x_{k}> 0\\
0 &x_{k}=0,
\end{cases}
\\ \label{889}
\pi_{k}(\d x_{k})&=[\pi_{k}^{0}\delta(x_{k})+e^{-\zeta_{k}x_{k}}g_{k}(x_{k})]\d x_{k}.
\end{align}
With this substitution, we obtain an equivalent formulation for the optimization problem in Eqs. \eqref{constraint-feasibility}-\eqref{positivity-feasibility} as below
\begin{subequations}
\begin{align}
\label{constraint-feasibility1}
&\sup_{\{\pi_{k}^{0}\},\{g_{k}(x_{k})\}}\int_{\mathcal{A}}r\Big(\sum_{k=1}^{M}{p_{k}(x_{k})}\Big)\prod_{k=1}^{M}\pi_{k}(\d x_{k}),\\ &{\text{s.t. :}} \label{positivity-feasibility2}
G_{k}(x_{k})= \Big(\pi_{k}^{0}+\int_{0^{+}}^{x_{k}}g_{k}(v) \d v \Big),\\
&\hspace{6mm}\pi_{k}^{0}+\int_{0^{+}}^{L_{k}}e^{-\zeta_{k}v}g_{k}(v)\d v = 1,\\ \label{positivity-feasibility1}
&\hspace{6mm} \pi_{k}^{0}\geq 0, \ \ g_{k}(x_{k})\geq 0,\quad \forall k\in [M],
\end{align}
\end{subequations}
where $p_{k}(x_{k})$ and $\pi_{k}(\d x_{k})$ are according to \eqref{13} and \eqref{889}.

Through the formulation in Eqs. \eqref{constraint-feasibility1}-\eqref{positivity-feasibility1}, we can show that the throughput maximization problem in Eqs. \eqref{constraint-feasibility}-\eqref{positivity-feasibility} is concave with respect to each coordinate over a convex feasible set. In particular, since the transformation between $f_{k}(x_{k})$ and $g_{k}(x_{k})$ is linear, the concavity of \eqref{constraint-feasibility}-\eqref{positivity-feasibility} can be shown equivalently by proving the concavity of the formulation in Eqs. \eqref{constraint-feasibility1}-\eqref{positivity-feasibility1}. To this end, suppose that $\big\{\big(\pi_{k}^{0,1},g_{k}^{1}(x_{k})\big)\big\}_{k=1}^{M}\ \text{and} $ $\big\{\big({{\pi}}_{k}^{0,2},{g}_{k}^{2}(x_{k})\big)\big\}_{k=1}^{M}$ are two arbitrary sets of optimization parameters belonging to the feasible region defined in Eqs. \eqref{positivity-feasibility2}-\eqref{positivity-feasibility1}. Then for all $\alpha \in [0,1]$ and $\bar{\alpha}\triangleq(1-\alpha)$, it readily follows that $\{\big(\pi_{k}^{0,\alpha},g_{k}^{\alpha}(x_{k})\big)\}_{k=1}^{M}$ also satisfies \eqref{positivity-feasibility2}-\eqref{positivity-feasibility1}, where $\pi_{k}^{0,\alpha}=\alpha{{\pi}}_{k}^{0,1}+ \bar{\alpha}{{\pi}}_{k}^{0,2}$ and $g_{k}^{\alpha}(x_{k})=\alpha {g}_{k}(x_{k})+\bar{\alpha}{g}_{k}(x_{k})$ are the convex combinations of the densities and atoms, respectively. This proves the convexity of the feasible region \eqref{positivity-feasibility2}-\eqref{positivity-feasibility1}.
\begin{proposition}(\textit{Coordinate-wise Convexity})
\label{proposition:1}
Let $\widehat{R}^{\alpha}_{j}$, $\widehat{R}_{j}^{1}$ and $\widehat{R}_{j}^{2}$ be the utility functions corresponding to $\big\{\big(\pi_{k}^{0,\alpha},g_{k}^{\alpha}(x_{k})\big)\big\}_{k=1}^{M},$
$\big\{\big(\pi_{k}^{0,1},g_{k}^{1}(x_{k})\big)\big\}_{k=1}^{M}$, and $\big\{\big({\pi}_{k}^{0,2},{g}_{k}^{2}(x_{k})\big)\big\}_{k=1}^{M}$ respectively, such that
\begin{align}
\nonumber
&\big(\pi_{k}^{0,\alpha},g^{\alpha}_{k}(x_{k})\big)=\alpha \big(\pi_{k}^{0,1},g^{1}_{k}(x_{k})\big)+\bar{\alpha}\big({\pi}_{k}^{0,2},{g}^{2}_{k}(x_{k})\big),& k= j, \\ \nonumber
&\big(\pi_{k}^{0,\alpha},g^{\alpha}_{k}(x_{k})\big)=\big(\pi_{k}^{0,1},g^{1}_{k}(x_{k})\big)=\big({\pi}_{k}^{0,2},{g}^{2}_{k}(x_{k})\big),&k\neq j.
\end{align}
Then,
\begin{align}
\widehat{R}^{\alpha}_{j}\geq \alpha \widehat{R}_{j}^{1}+\bar{\alpha}{\widehat{R}}_{j}^{2}.
\end{align}
\begin{proof}
The proof is relegated to Appendix B.
\end{proof}
\end{proposition}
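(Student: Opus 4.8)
The plan is to reduce the claimed one-coordinate concavity of $\mathcal{R}$ to the joint concavity of a \emph{perspective} function. First I would freeze all coordinates $k\neq j$. Since $\big(\pi_k^{0},g_k\big)$ is identical for the three parameter sets whenever $k\neq j$, the quantity $s\triangleq\sum_{k\neq j}p_k(x_k)$ is a fixed nonnegative measurable function of $(x_k)_{k\neq j}$ and $\prod_{k\neq j}\pi_k(dx_k)$ is a fixed finite measure. Using the representation \eqref{13}--\eqref{889} of $p_j$ and $\pi_j$, splitting off the $j$-th factor of the product measure, and integrating in $x_j$ first (legitimate since $r\geq 0$), we obtain
\begin{align}
\nonumber
\mathcal{R}_j=\int_{\mathcal{A}_j}\Big(\pi_j^{0}\,r(s)+\int_{0^{+}}^{L_j}e^{-\zeta_j x_j}\,g_j(x_j)\,r\Big(s+\frac{\lambda_j G_j(x_j)}{g_j(x_j)}\Big)\,dx_j\Big)\prod_{k\neq j}\pi_k(dx_k),
\end{align}
where $p_j(0)=0$ has been used so that the atom at $x_j=0$ contributes only $\pi_j^{0}r(s)$. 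The atom term is linear in $\pi_j^{0}$ and hence reproduces any convex combination exactly, so everything hinges on the density integral.

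Next, for each fixed $s\geq 0$ I would introduce $\phi_s(a,b)\triangleq b\,r\big(s+\lambda_j a/b\big)$ on the cone $\{a\geq 0,\ b>0\}$. This is exactly the perspective transform of the scalar map $u\mapsto r(s+\lambda_j u)$, which is concave because $r$ is concave and $u\mapsto s+\lambda_j u$ is affine with $\lambda_j>0$; by the standard fact that the perspective of a concave function is jointly concave, $\phi_s$ is jointly concave in $(a,b)$. For each fixed $x_j$ the map $\big(\pi_j^{0},g_j(\cdot)\big)\mapsto\big(G_j(x_j),g_j(x_j)\big)$ is affine ($G_j$ depends affinely on $\big(\pi_j^{0},g_j\big)$ by \eqref{positivity-feasibility2}, and $g_j(x_j)$ linearly), so $e^{-\zeta_j x_j}\phi_s\big(G_j(x_j),g_j(x_j)\big)$ is a concave functional of $\big(\pi_j^{0},g_j(\cdot)\big)$. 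Since $\big(G_j^{\alpha}(x_j),g_j^{\alpha}(x_j)\big)=\alpha\big(G_j^{1}(x_j),g_j^{1}(x_j)\big)+\bar{\alpha}\big(G_j^{2}(x_j),g_j^{2}(x_j)\big)$, this yields the pointwise inequality $\phi_s\big(G_j^{\alpha},g_j^{\alpha}\big)\geq\alpha\,\phi_s\big(G_j^{1},g_j^{1}\big)+\bar{\alpha}\,\phi_s\big(G_j^{2},g_j^{2}\big)$ for every $x_j$ and every $(x_k)_{k\neq j}$.

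Finally I would integrate this pointwise inequality against the nonnegative measures $e^{-\zeta_j x_j}\,dx_j$ on $(0^{+},L_j]$ and $\prod_{k\neq j}\pi_k(dx_k)$, which preserves it, and add back the affine atom term, obtaining $\mathcal{R}_j^{\alpha}\geq\alpha\mathcal{R}_j^{1}+\bar{\alpha}\mathcal{R}_j^{2}$. The step requiring the most care is the degenerate boundary $b=g_j(x_j)\downarrow 0$, where $p_j$ blows up: one verifies that $\phi_s$ extends to a finite upper-semicontinuous concave function on the closed cone $\{a\geq 0,\ b\geq 0\}$ (with $\phi_s(0,0)=0$ and $\phi_s(a,0)=\lambda_j a\lim_{x\to\infty}r(x)/x$, the limit existing in $[0,\infty)$ because $r(x)/x$ is nonincreasing), so the composition and the pointwise-to-integrated passage remain valid for every admissible $g_j$ without leaving the region of concavity. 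The measurability and integrability of $s$ required above follow from admissibility of the frozen policies and the ergodicity of Theorem~1; the remaining manipulations are routine.
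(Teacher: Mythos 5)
Your proposal is correct and follows essentially the same route as the paper's Appendix~B/C argument: both split off the atom term (which is linear in $\pi_j^{0}$), reduce to a pointwise inequality in $x_j$ with the other coordinates frozen, and rest on the concavity of the perspective map $(a,b)\mapsto b\,r(\beta+\lambda_j a/b)$ composed with the affine dependence of $\big(G_j(x_j),g_j(x_j)\big)$ on $\big(\pi_j^{0},g_j\big)$ --- the paper phrases this as the superadditivity lemma of Appendix~C, which is equivalent by positive homogeneity. Your extra care at the boundary $g_j(x_j)\downarrow 0$ is a reasonable technical refinement the paper leaves implicit.
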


Now, define an ensemble of perturbation functions, $\{\psi_{k}\}_{k=1}^{M}$, such that
\begin{align}
\label{888}
\int_{0^{+}}^{L_{k}}\psi_{k}(v) \d v &=0\\
\label{constraint}
\int_{0^{+}}^{L_{k}}\exp(-\zeta_{k}v) \psi_{k}(v) \d v &=0,
\end{align}
and the $\psi_{k}$ are continuous and bounded on their domain $(0,L_{k}]$ and $\psi_{k}(0)=0$. For sufficiently small $\varepsilon_{k}>0, k \in [M]$, it thus follows that $g^{\varepsilon_{k}}_{k}(x_{k})\triangleq g_{k}(x_{k})+\varepsilon_{k} \psi_{k}(x_{k})$ satisfies \eqref{positivity-feasibility2}-\eqref{positivity-feasibility1} with the same atoms $\pi^{0}_{k}$ and thus lies inside the feasibility region. Then, with $\bm{\varepsilon} \triangleq (\varepsilon_{1},\varepsilon_{2},\cdots, \varepsilon_{M})$, it must be true for a global maximum solution that
\begin{align}
\label{StationaryPointInequality}
\widehat{R}^{\bm{\varepsilon}}\leq \widehat{R},
\end{align}
where
\begin{align}
\label{R(s)}
{\widehat{R}}^{\bm{\varepsilon}}= \int_{\mathcal{A}}r\Big(\sum_{k=1}^{M}p_{k}^{\varepsilon_{k}}(x_{k})\Big)\prod_{k=1}^{M}\pi_{k}^{\varepsilon_{k}}(\d x_{k}),
\end{align}
and
\begin{align}
\nonumber
\pi_{k}^{\varepsilon_{k}}(x_{k})&\triangleq [\pi_{k}^{0}\delta(x_{k})+e^{-\zeta_{k}x_{k}}g_{k}(x_{k})+\varepsilon_{k}e^{-\zeta_{k}x_{k}}\psi_{k}(x_{k})]\d x_{k}\\ &=\pi_{k}(\d x_{k})+\varepsilon_{k}e^{-\zeta_{k}x_{k}}\psi_{k}(x_{k})\d x_{k},
\end{align}
and $p_{k}^{\varepsilon_{k}}(x_{k})$ is calculated from Eq. \eqref{13} to be,
\begin{align}
\label{pp}
p_{k}^{\varepsilon_{k}}(x_{k})=
\begin{cases}
{\lambda_{k}}\dfrac {{G_{k}(x_{k})+\varepsilon_{k}\Psi_{k}(x_{k})}}{{g_{k}(x_{k})+\varepsilon_{k}\psi_{k}(x_{k})}}  & x_{k}> 0 \\
0  &x_{k}=0,
\end{cases}
\end{align}
with,
\begin{align}
\label{Eq:definitionofPsi}
\Psi_{k}(x_{k})\triangleq \int_{0}^{x_{k}}\psi_{k}(v) \d v.
\end{align}
For the moment, we assume that only the $j^{th}$ coordinate is perturbed; that is $\varepsilon_{k}=0, \forall k\neq j$. Expanding the right hand side of \eqref{R(s)} to first order then results in
\small\begin{align}
\nonumber
{\widehat{R}}^{\varepsilon_{j}} &=\int_{\mathcal{A}}\Bigg[r\Big(\sum_{k=1}^{M}p_{k}(x_{k})\Big)+\varepsilon_{j}{\partial r\Big(\sum_{k=1}^{M}p_{k}(x_{k})\Big)  \over \partial p_{j}(x_{j})}{\d p^{\varepsilon_{j}}_{j}(x_{j})\over \d\varepsilon_{j}}\Big \vert_{\varepsilon_{j}=0}\Bigg] \\ \nonumber &\times \Big[\pi_{j}(\d x_{j})+\varepsilon_{j}e^{-\zeta_{j}x_{j}}\psi_{j}(x_{j})\d x_{j}\Big]\hspace{-1mm}\prod_{k\in [M]-j}\hspace{-2mm}\pi_{k}(\d x_{k})\\ \nonumber
&=\widehat{R}+\varepsilon_{j} \int_{\mathcal{A}}r\Big(\sum_{k=1}^{M}p_{k}(x_{k})\Big)e^{\zeta_{j}x_{j}}\psi_{j}(x_{j})\d x_{j}\hspace{-2mm}\prod_{k\in [M]-j}\hspace{-3mm}\pi_{k}(\d x_{k})\\ \label{14}
&+\varepsilon_{j}\int_{\mathcal{A}}{\partial r\big(\sum_{k=1}^{M} {p_{k}(x_{k})}\big) \over \partial {p_{j}(x_{j})}}{\d p_{j}^{\varepsilon_{j}}(x_{j})\over \d\varepsilon_{j}}\Big \vert_{\varepsilon_{j} =0}  \prod_{k=1}^{M}\pi_{k}(\d x_{k})+\mathcal{O}(\varepsilon_{j}^{2}).
\end{align}\normalsize
On the other hand, we note that
\begin{align}
\label{15}
{\d p_{j}^{\varepsilon_{j}}(0)\over \d \varepsilon_{j}}\Big \vert_{\varepsilon_{j}=0} =0,
\end{align}
since $p_{k}^{\varepsilon_{j}}(0)=0$ from \eqref{pp}.\footnote{Alternatively, $p_{j}^{\varepsilon_{j}}(0)=0$ for all $\varepsilon_{j}$ as the battery is empty.} Therefore,
\begin{align}
\int_{\mathcal{A}}{\partial r\big(\sum_{k=1}^{M} {p_{k}(x_{k})}\big) \over \partial {p_{j}(x_{j})}}{\d p_{j}^{\varepsilon_{j}}(x_{j})\over \d\varepsilon_{j}}\Big \vert_{\varepsilon_{j} =0} \delta(x_{j})\d x_{j}=0,
\end{align}
and thus from Eq. \eqref{14} we obtain \eqref{eq:longline1} on the next page.
\begin{figure*}[!t]
\normalsize

\begin{align}
\nonumber
{\widehat{R}}^{\varepsilon_{j}}&= \widehat{R}+\varepsilon_{j} \int_{\mathcal{A}}r\Big(\sum_{k=1}^{M}p_{k}(x_{k})\Big)e^{\zeta_{j}x_{j}}\psi_{j}(x_{j})\d x_{j}\hspace{-2mm}\prod_{k\in [M]-j}\hspace{-2mm}\pi_{k}(\d x_{k})\\ \label{eq:longline1}
&\hspace{4mm}+\varepsilon_{j}\int_{\mathcal{A}}{\partial r\big(\sum_{k=1}^{M} {p_{k}(x_{k})}\big) \over \partial {p_{j}(x_{j})}}{dp_{j}^{\varepsilon_{j}}(x_{j})\over d\varepsilon_{j}}\Big \vert_{\varepsilon_{j} =0}e^{\zeta_{j}x_{j}} g_{j}(x_{j})\d x_{j} \hspace{-2mm} \prod_{k\in[M]-j}\pi_{k}(\d x_{k})+\mathcal{O}(\varepsilon_{j}^{2}).
\end{align}
\hrulefill
\begin{align}
\nonumber
&\int_{\mathcal{A}}r\Big(\sum_{k=1}^{M}p_{k}(x_{k})\Big)e^{\zeta_{j}x_{j}}\psi_{j}(x_{j})\d x_{j}\prod_{k\in [M]-j}\pi_{k}(\d x_{k})\\ \label{necessary-condition}&+\int_{\mathcal{A}}{\partial r\big(\sum_{k=1}^{M} {p_{k}(x_{k})}\big) \over \partial {p_{j}(x_{j})}}{dp_{j}^{\varepsilon_{j}}(x_{j})\over d\varepsilon_{j}}\Big \vert_{\varepsilon_{j} =0} e^{\zeta_{j}x_{j}}g_{j}(x_{j})\d x_{j}\prod_{k\in [M]-j}\pi_{k}(\d x_{k})=0.
\end{align}
\hrulefill
\setcounter{equation}{73}
\begin{align}
\nonumber
&\int \limits _{0}^{L_{j}}\Big[ e^{-\zeta_{j}x_{j}}{\partial \expect_{j}\big[r\big(\sum_{k=1}^{n} {p_{k}(x_{k})}\big)\big] \over \partial {p_{j}(x_{j})}}p_{j}(x_{j})+e^{-\zeta_{j}x_{j}}\expect_{j}\big[r\big(\sum_{k=1}^{M} {p_{k}(x_{k})}\big)\big] \Big]\psi_{j}(x_{j})\d x_{j}\\ \nonumber &= \left(e^{-\zeta_{j}x_{j}} {\partial \expect_{j}\big[r\big(\sum_{k=1}^{M} {p_{k}(x_{k})}\big)\big] \over \partial {p_{j}(x_{j})}}p_{j}(x_{j})+e^{-\zeta_{j}x_{j}}\expect_{j}\big[r\big(\sum_{k=1}^{M} {p_{k}(x_{k})}\big)\big] \right)\Psi_{j}(x_{j})\Big \vert_{0}^{L_{j}}\\ \label{expression} &\hspace{4mm}+\int \limits _{0}^{L_{j}}{\partial \over \partial x_{j}} \Big[ e^{-\zeta_{j}x_{j}}{\partial \expect_{j}\big[r\big(\sum_{k=1}^{M} {p_{k}(x_{k})}\big)\big] \over \partial {p_{j}(x_{j})}}p_{j}(x_{j}) +
e^{-\zeta_{j}x_{j}}\expect_{j}\big[r\big(\sum_{k=1}^{M} {p_{k}(x_{k})}\big)\big] \Big]\Psi_{j}(z_{j})\d x_{j}.
\end{align}
\hrule
\begin{align}
\nonumber
&\int \limits _{0}^{L_{j}}\Bigg( \lambda_{j}e^{-\zeta_{j}x_{j}}{\partial \expect_{j}\big[r\big(\sum_{k=1}^{M} {p_{k}(x_{k})}\big)\big] \over \partial {p_{j}(x_{j})}}\\ \label{family-solution} &\hspace{20mm}-{\partial \over \partial x_{j}} \Big[ e^{-\zeta_{j}x_{j}}{\partial \expect_{j}\big[r\big(\sum_{k=1}^{M} {p_{k}(x_{k})}\big)\big] \over \partial {p_{j}(x_{j})}}p_{j}(x_{j}) +
e^{-\zeta_{j}x_{j}}\expect_{j}\big[r\big(\sum_{k=1}^{M} {p_{k}(z_{k})}\big)\big] \Big]\Bigg) \Psi_{j}(x_{j})\d x_{j}=0.
\end{align}
\hrule
\vspace*{4pt}
\end{figure*}
This expansion, accompanied with inequality \eqref{StationaryPointInequality} establishes \eqref{necessary-condition} as a necessary condition for a locally (and thus globally) optimal solution, where we have neglected the second order term $\mathcal{O}(\varepsilon_{j}^{2})$. Now with slight abuse of notation, let
\setcounter{equation}{67}
\begin{align}
\label{expectation-j}
&\expect_{j}\Bigg[r\Big(\sum_{k=1}^{M} {p_{k}(x_{k})}\Big)\Bigg]\triangleq
\int_{\mathcal{A}_{j}}r\Big(\sum_{k=1}^{M} {p_{k}(x_{k})}\Big) \prod_{{k\in [M]-j}}\pi_{k}(\d x_{k}),
\end{align}
denote the expectation over all the coordinates except the $j$-th coordinate. Then \eqref{necessary-condition} can be restated as
\begin{align}
\nonumber
&\int \limits _{0}^{L_{j}}\Bigg[ {\partial \expect_{j}\Big[r\big(\sum_{k=1}^{n} {p_{k}(x_{k})}\big)\Big] \over \partial {p_{j}(x_{j})}}{\d p_{j}^{\varepsilon_{j}}(x_{j})\over \d\varepsilon_{j}}\Big \vert_{\varepsilon_{j} =0}e^{-\zeta_{j}x_{j}}g_{j}(x_{j})  \\  \label{expectationform} &\hspace{6mm}+\expect_{j}\Big[r\big(\sum_{k=1}^{M} {p_{k}(z_{k})}\big)\Big]e^{-\zeta_{j}x_{j}} \psi_{j}(z_{j})\Bigg]\d x_{j}=0,
\end{align}
where we used the fact that
\begin{align}
\nonumber
{\expect}_{j}\left[{\partial r\Big(\sum_{i=1}^{n} {p_{i}(z_{i})}\Big) \over \partial {p_{j}(z_{j})}}\right]={{\partial \over \partial {p_{j}(z_{j})}} {\expect}_{j} \left[r\Big(\sum_{i=1}^{n} {p_{i}(z_{i})}\Big)\right]}.
\end{align}
On the other hand, from Eq. \eqref{pp}, we compute
\begin{align}
g_{j}(x_{j}){\d p^{\varepsilon_{j}}_{j}(x_{j})\over \d\varepsilon_{j}}\Big \vert_{\varepsilon_{j}=0}&=\lambda_{j}\Big[\Psi_{j}(x_{j})-{\psi_{j}(x_{j})G_{j}(x_{j})\over g_{j}(x_{j})}\Big]\\ \label{replace} &=\lambda_{j}\Psi_{j}(x_{j})-\psi_{j}(x_{j})p_{j}(x_{j}).
\end{align}
We thus further proceed by substituting \eqref{replace} in Eq. \eqref{expectationform}, \textit{i.e.},
\begin{align}
\nonumber
&\int \limits _{0}^{L_{j}}\Big[ \lambda_{j}e^{-\zeta_{j}x_{j}}{\partial \expect_{j}\big[r\big(\sum_{k=1}^{M} {p_{k}(x_{k})}\big)\big] \over \partial {p_{j}(x_{j})}}\Big]\Psi_{j}(x_{j})\d x_{j}\\ \nonumber & -
\int \limits _{0}^{L_{j}}\Big[ e^{-\zeta_{j}x_{j}}{\partial \expect_{j}\big[r\big(\sum_{k=1}^{M} {p_{k}(x_{k})}\big)\big] \over \partial {p_{j}(x_{j})}}p_{j}(x_{j})\\  \label{equivalent}
&+e^{-\zeta_{j}x_{j}}\expect_{j}\big[r\big(\sum_{k=1}^{M} {p_{k}(x_{k})}\big)\big] \Big]\psi_{j}(x_{j})\d x_{j}=0.
\end{align}
Integrating by parts, the second integral can be evaluated as \eqref{expression}. But since $\Psi_{j}(0)=\Psi_{j}(L_{j})=0$ due to the definition in Eq. \eqref{Eq:definitionofPsi} and \eqref{888}, the second term in Eq. \eqref{expression} vanishes. Replacing the remaining terms in Eq. \eqref{equivalent} yields \eqref{family-solution} which must hold for all defined $\Psi_{j}(x_{j})$ in Eq. \eqref{Eq:definitionofPsi}, such that $\psi_{j}(x_{j})$ satisfies both \eqref{888} and \eqref{constraint}. A family of solutions for Equation \eqref{family-solution} can be supplied by simultaneously noting from Eq. \eqref{constraint} that
\setcounter{equation}{75}
\begin{align}
0&=\int_{0}^{L_{j}}e^{-\zeta_{j}x_{j}}\psi_{j}(x_{j})\d x_{j}
\\&=e^{-\zeta_{j}x_{j}}\Psi_{j}(x_{j})\Big \vert_{0}^{L_{j}}+\zeta_{j} \int_{0}^{L_{j}}e^{-\zeta_{j}x_{j}}\Psi_{j}(x_{j})\d x_{j}
\\ \label{zero-equality}&\stackrel {\rm{(a)}}{=} \zeta_{j} \int_{0}^{L_{j}}e^{-\zeta_{j}x_{j}}\Psi_{j}(x_{j})\d x_{j},
\end{align}
where \rm{(a)} is true since $\Psi_{j}(0)=\Psi_{j}(L_{j})=0$ as noted before. Now, if $p_{j}(x_{j})$ is twice continuously differentiable, it follows that the term inside the parentheses in Eq. \eqref{family-solution} is continuously differentiable as the rate function $r(x)$ is assumed to be three times continuously differentiable. Furthermore, since $\Psi_{j}(x_{j})$ is also continuously differentiable, from \eqref{zero-equality} and the fundamental lemma of the calculus of variations, we then conclude that \eqref{family-solution} holds only if the term inside the parentheses is in the form of $K_{j}\exp(-\zeta_{j}x_{j})$ for some constant $K_{j}$. Thus, as a necessary condition, we have
\begin{align}
\label{necessary-condition1}
&\nonumber {p_{j}(x_{j}){p'_{j}}(x_{j}){\partial^{2} {\expect}_{j}[r(\sum_{k=1}^{M}{p_{k}(x_{k})})]\over \partial^{2} {p_{j}(x_{j})}}}+(\lambda_{j}-\zeta_{j} p_{j}(x_{j}))\\  &\times {\partial {\expect}_{j}[r(\sum_{k=1}^{M} {p_{k}(x_{k})})]\over \partial {p_{j}(x_{j})}}+\zeta_{j} {\expect}_{j}[r(\sum_{k=1}^{M} {p_{k}(x_{k})})]+K_{j}=0.
\end{align}
\begin{remark} Rewriting Equation \eqref{necessary-condition1} as
\small\begin{align}
&{p'_{j}}(x_{j})= 
-\nonumber {p_{j}(x_{j}){\partial^{2} {\expect}_{j}[r(\sum_{k=1}^{M}{p_{k}(x_{k})})]\over \partial^{2} {p_{j}(x_{j})}}}\big]^{-1}\big[(\lambda_{j}-\zeta_{j} p_{j}(x_{j}))\\ \label{initialize-K} &\times {\partial {\expect}_{j}[r(\sum_{k=1}^{M} {p_{k}(x_{k})})]\over \partial {p_{j}(x_{j})}}+\zeta_{j} {\expect}_{j}[r(\sum_{k=1}^{M} {p_{k}(x_{k})})]+K_{j},
\end{align}\normalsize
it is easy to verify that $K_{j}$ provides a degree of freedom to set the initial slope $p_{j}'(x_{j})\big \vert_{x_{j}=0^{+}}$ of the power policy $p_{j}(x_{j})$.
\end{remark}

Now since the choice of $j^{th}$ coordinate was arbitrary, \eqref{necessary-condition1} holds for all $j\in [M]$. Accordingly, we obtain a system of coupled PIEDs over $M$ coordinates with $2M$ degree of freedom\footnote{In the system of equations, $\{p_{k}(0^{+})\}_{k=1}^{M}$ and $\{p'_{k}(0^{+})\}_{k=1}^{M}$ (or equivalently $\{p_{k}(0^{+})\}_{k=1}^{M}$ and $\{K_{k}\}_{k=1}^{M}$) are free parameters.}  where the integration is implicit in the notation of $\expect_{j}[\cdot]$ (ref. Eq. \eqref{expectation-j}). In the following, we consider solutions in the infinite and finite battery cases.
\subsubsection{Infinite Storage Battery} Motivated by the converse result for the infinite storage battery, we consider a set of admissible policies as below
\begin{align}
\label{policies-infinite}
\bar{p}_{k}(x_{k}) = \begin{cases}
 {\big(\lambda_{k}/\zeta_{k}\big)}+\varrho & \text{$x_{k}>0$} \\
  0 & \text{$x_{k}=0$},
\end{cases}
\end{align}
where the excess power $\varrho>0$ is added to ensure the positive recurrence of the process. In the limit, as $\varrho \rightarrow 0$, the suggested policies in Eq. \eqref{policies-infinite} satisfy \eqref{necessary-condition1} for all $j$, provided
\begin{align}
K_{j}= -\zeta_{j}{\expect}_{j}\left[r\big(\sum_{k=1}^{M} {\lambda_{k}\over \zeta_{k}}\big)\right].
\end{align}
The average transmission power of \eqref{policies-infinite} can then be evaluated as
\begin{align}
\hspace{-4mm}\expect[\bar{p}_{k}(x_{k})]&=\pi_{k}^{0}\bar{p}_{k}(0)+\big((\lambda_{k}/\zeta_{k})+\varrho\big)\int_{0^{+}}^{\infty}f_{k}(x_{k})\d x_{k}\\
&=\big((\lambda_{k}/\zeta_{k})+\varrho \big)\big(1-\pi_{k}^{0}\big).
\end{align}
On the other hand, in light of Eq. \eqref{average-power-transmission} we have $\expect[\bar{p}_{k}(x_{k})]=\lambda_{k}/\zeta_{k}$. As a result, a transmission node that exploits $\bar{p}_{k}(x_{k})$ as transmission power policy has the following probability mass at zero
\begin{align}
\pi_{k}^{0}={\varrho \over (\lambda_{k}/\zeta_{k})+\varrho},
\end{align}
\textit{i.e.}, it sends information for a fraction $\dfrac{(\lambda_{k}/\zeta_{k})}{(\lambda_{k}/\zeta_{k})+\varrho}$ of time. Moreover, associated with $\bar{p}_{k}(x_{k})$, the mean square deviation of transmission power is given by
\begin{align}
\nonumber
\sigma^{2}\big(\bar{p}_{k}(X_{k})\big)&=\big({{\lambda_{k}\over \zeta_{k}}}\big)^{2}\pi_{k}^{0}+\int_{0^{+}}^{\infty}\Big(\bar{p}_{k}(x_{k})-{\lambda_{k}\over \zeta_{k}}\Big)^{2}f_{k}(x_{k}) \ d {x_{k}}
\\&={{(\lambda_{k}/ \zeta_{k})}{\varrho}}.
\end{align}
For these power transmission strategies, we also have
\begin{align}
\label{lowlow}
\widehat{R}\geq r\Big(\sum_{k=1}^{M}(\lambda_{k}/\zeta_{k}+\varrho)\Big)\prod_{k=1}^{M}{{(\lambda_{k}/ \zeta_{k})}\over {(\lambda_{k}/ \zeta_{k})+{\varrho}}},
\end{align}
where the inequality follows from neglecting situations in which a strict subset of nodes are transmitting and the rest are silent due to battery exhaustion.  As ${\varrho \downarrow 0}$, the upper bound \eqref{upper bound 1} and the lower bound \eqref{lowlow} coincide with each other. The total average throughput is given asymptotically by
\begin{align}
\widehat{R} = r\Big(\sum_{k=1}^{M}\lambda_{k}/\zeta_{k}\Big).
\end{align}
\begin{figure*}[!b]
\normalsize
\hrule
\setcounter{equation}{95}
\begin{align}
\label{eq:meusamline}
&p_{j}^{(N+1)}(x_{j})=\arg \max_{\xi}\widehat{R}(p_{1}^{(N+1)}(x_{1}),\cdots, p_{j-1}^{(N+1)}(x_{j-1}),\xi, p_{j+1}^{(N)}(x_{j+1}),\cdots, p_{M}^{(N)}(x_{M})).
\end{align}
\end{figure*}
Thus, near optimal performance of the energy harvesting system can be achieved when $\varrho\downarrow 0$, and the behaviour of a classical communication systems (in the sense of using a constant power supply without interruption) can be closely approximated at the same time.
\subsubsection{Finite Storage Battery}
In contrast to the case of batteries with infinite capacity, the system of equations in Eq. \eqref{necessary-condition1} doesn't appear to admit a closed form expression for power policies when the storage capacity is finite. The remaining option is thus to solve \eqref{necessary-condition1} numerically. However, the complexity in dealing with such systems is that the equations are not independent, but \textit{coupled}. Alternatively, if all but one of the $p_{k}(\cdot)$ are known, the remaining one can be obtained by solving a first order non-linear ODE in terms of the corresponding coordinate, using \eqref{necessary-condition1}. First from \eqref{G-p} and since $G'_{k}(x_{k})=g_{k}(x_{k})$ we obtain by integration
\setcounter{equation}{89}
\begin{align}
\ln G_{k}(x_{k})-\ln G_{k}(0)=\int_{0^{+}}^{x_{k}}{\lambda_{k}\over p_{k}(v_{k})}\d v_{k}.
\end{align}
Therefore,
\begin{align}
G_{k}(x_{k})=\pi_{k}^{0}\exp \Big(\int_{0^{+}}^{x_{k}}{\lambda_{k}\over p_{k}(v)}\d v \Big).
\end{align}
By differentiating both sides with respect to $x_{k}$, we obtain
\begin{align}
g_{k}(x_{k})={\pi_{k}^{0}\lambda_{k}\over p_{k}(x_{k})}\exp \Big(\int_{0^{+}}^{x_{k}}{\lambda_{k}\over p_{k}(v)}\d v \Big),
\end{align}
or equivalently,
\begin{align}
\label{asset-1}
f_{k}(x_{k})=\pi_{k}^{0}{{e^{-\lambda_{k}x_{k}}}\lambda_{k}\over p_{k}(x_{k})}\exp \Big(\int_{0^{+}}^{x_{k}}{\lambda_{k}\over p_{k}(v)}\d v \Big).
\end{align}
Also, due to the normalization condition
\begin{align}
\label{asset-2}
\pi_{k}^{0}+\int_{0^{+}}^{L_{k}}f_{k}(x_{k})\d x_{k}=1,
\end{align}
we have
\small \begin{align}
\label{asset-3}
\pi_{k}^{0}=\left[1+\int_{0^{+}}^{L_{k}}{e^{-\lambda_{k}x_{k}}\lambda_{k}\over p_{k}(x_{k})}\exp \Big(\int_{0^{+}}^{x_{k}}{\lambda_{k}\over p_{k}(v)}\d v \Big)\d x_{k} \right]^{-1},
\end{align}\normalsize
which can simply be derived via substituting \eqref{asset-1} in Eq. \eqref{asset-2} and solving for $\pi_{k}^{0}$.

With the help of \eqref{asset-1} and \eqref{asset-3}, we propose an iterative method, outlined as Algorithm \ref{Alg:1}, that computes a solution for \eqref{necessary-condition1}. Also, the convergence analysis of Algorithm \ref{Alg:1} follows the following three steps:
\begin{itemize}
\item [(\textit{i})]  at each iteration of Algorithm \ref{Alg:1}, the utility function \eqref{12} is non-decreasing,
\item [(\textit{ii})] the utility function in Eq. \eqref{12} is bounded above,
\item [(\textit{iii})] the utility \eqref{12} thus converges if Algorithm \ref{Alg:1} is allowed to iterate indefinitely (\textit{i.e.} no termination constraint).
\end{itemize}

 Specifically, in the first step, we denote the utility $\widehat{R}$ as an explicit function of the power policies in Algorithm \ref{Alg:1}, \textit{e.g.},
$\widehat{R}(p_{1}^{(0)}(x_{1}),p_{2}^{(0)}(x_{2}),\cdots ,p_{M}^{(0)}(x_{M}))$ is the initial utility. After the $N$th full iteration of steps 5-10 (outer loop) of Algorithm \ref{Alg:1}, in the $j$th iteration of 6-10 (inner loop), we then obtain Eq. \eqref{eq:meusamline} below. Therefore,
\setcounter{equation}{96}
\begin{align}
\nonumber
&\widehat{R}\big(p_{1}^{(N+1)}(x_{1}),\cdots, p_{j}^{(N)}(x_{j}),\cdots, p_{M}^{(N)}(x_{M})\big)
 \\ \label{inequality1}
&\leq\widehat{R}\big(p_{1}^{(N+1)}(x_{1}),\cdots, p_{j}^{(N+1)}(x_{j}),\cdots, p_{M}^{(N)}(x_{M})\big).
\end{align}
In addition, since the objective function is upper bounded by \eqref{upper bound}, we further have
\begin{align}
\widehat{R}_{\text{sup}} &\triangleq \sup_{\{p_{k}(x_{k})\}} \widehat{R}(p_{1}(x_{1}),p_{2}(x_{2}),\cdots, p_{M}(x_{M}))\\
 \label{inequality2} &\leq r\Big(\sum_{k=1}^{M}{\lambda_{k}\over \zeta_{k}}(1-\exp(-\zeta_{k}L_{k}))\Big).
\end{align}
Concluding from \eqref{inequality1} and \eqref{inequality2}, the sequence $\widehat{R}\big(\{p_{k}^{N}(x_{k})\}_{N=0}^{\infty}\big)$ must converge in the limit as $N\rightarrow \infty$.

Now, we concentrate on two important degenerate cases of our problem that can substantially reduce the computational burden of solving the PIEDs.
\begin{algorithm}[t]
\caption{\small{Gauss-Seidel Alg. for Transmission Power Policies}}
\begin{algorithmic}[1]
\ForAll {$k \in [M]$}
\State Initialize $p_{k}(x_{k})$ with some arbitrary function;
\State compute \eqref{asset-1} and \eqref{asset-3};
\EndFor
\Repeat
\For{$j\gets 1, M$}
\State calculate \eqref{expectation-j};
\State update $p_{j}(x_{j})$ by solving \eqref{necessary-condition1} for optimized values of $p_{j}(0^{+})$ and $K_{j}$;
\State update \eqref{asset-1} and \eqref{asset-3} for $k=j$;
\EndFor
\Until{termination criterion is satisfied}.
\end{algorithmic}
\label{Alg:1}
\end{algorithm}
In the first scenario, suppose that all the transmission nodes scavenge energy in the same manner. By this statement, we mean that the statistical parameters of all the energy harvesters are identical, \textit{i.e.}, $\lambda_{k}=\lambda$ and $ \zeta_{k}=\zeta$ for all $k=1,2,\cdots, M$. In the symmetric case, further assume that the batteries have identical capacities ($L_{k}=L$) and all transmitters employ the same power policy $(p_{k}(x_{k})=p(x_{k}))$. Then, Equation \eqref{necessary-condition1} reduces to
\begin{align}
&\nonumber {p(x_{j}){p'}(x_{j}){\partial^{2} {\expect}_{j}[r(\sum_{k=1}^{M}{p(x_{k})})]\over \partial^{2} {p(x_{j})}}}+(\lambda-\zeta p(x_{j}))\\ \label{symm-MAC}  &\times {\partial {\expect}_{j}[r(\sum_{k=1}^{M} {p(x_{j})})]\over \partial {p(x_{j})}}+\zeta {\expect}_{j}[r(\sum_{k=1}^{M} {p(x_{j})})]+K=0,
\end{align}
where $j$ is arbitrary and chosen from $[M]$, and the operator $\expect_{j}[\cdot]$ now simplifies as
\begin{align}
\label{asset-4}
\expect_{j}\left[r\Big(\sum_{k=1}^{M} {p(x_{k})}\Big)\right] =
\int_{\mathcal{A}_{j}}r\Big(\sum_{k=1}^{M} {p(x_{k})}\Big) \prod_{{k\in [M]}-j}\pi(\d x_{k}).
\end{align}
If we rearrange the terms in Equation \eqref{symm-MAC}, we have that
\begin{align}
\label{fixed-point}
{p}(x_{j})= \mathscr{F}\big({p}(x_{j})\big),
\end{align}
where the mapping $\mathscr{F}(\cdot):\mathbb{C}^{1}(0,L]\rightarrow \mathbb{C}^{1}(0,L]$ is given by
\begin{align}
\nonumber
&\mathscr{F}\big({p}(x_{j})\big) = \\ \nonumber &p(0^{+})-\int_{0^{+}}^{x_{j}}\Big[K_{j}+\zeta {\expect}_{j}[r(\sum_{k=1}^{M} {p(v_{k})})]+ (\lambda-\zeta p(v_{j})) \\ \label{fix-point} &\times{\partial {\expect}_{j}[r(\sum_{k=1}^{M} {p(v_{k})})]\over \partial {p(v_{j})}}\Big]\Big[p(v_{j}){{\partial^{2} \expect}_{j}[r(\sum_{k=1}^{M}{p(v_{k})})]\over \partial^{2} {p(v_{j})}} \Big]^{-1}\d v_{j}.
\end{align}
As a result, it follows that the desired $p(x_{j})$ is a fixed point of $\mathscr{F}(\cdot)$. This then suggests an alternative algorithm for this special case (see Algorithm 2).

Now in the second scenario, consider that there is only one transmitter in the communication system (\textit{i.e.} a point-to-point setup). We thus have a simplified formulation as a necessary condition here, \textit{i.e.},
\small\begin{align}
\label{ODE}
&{p(x){p'}(x){\text{d}^{2} r\big(p(x)\big)\over \text{d}^{2} {p(x)}}}+\big(\lambda-\zeta p(x)\big) {\d r\big(p(x)\big)\over \d {p(x)}}+\zeta {r\big(p(x)\big)}+K=0.
\end{align}\normalsize
As argued in \cite{Mitran}, this is a second order, non-linear, autonomous ODE that can be solved numerically by employing linear multistep methods (\textit{e.g.} Runge-Kutta or Adams-Bashforth). The next lemma demonstrates some properties of solutions to this ODE.
\begin{lemma}
\label{Lemma:1}
Suppose $K>-\zeta r(\lambda/\zeta)$ in Eq. \eqref{ODE}, then for the Shannon rate function
\begin{enumerate}
\item[(\textit{i})] any solution $p(x)$ is a strictly increasing function of $x$ for $x\geq 0$, and $p(x)\rightarrow \infty$ as $x\rightarrow \infty$,
\item[(\textit{ii})] $p(x)$ grows doubly exponentially fast as $x\rightarrow \infty$.\hfill$\square$
\end{enumerate}
\end{lemma}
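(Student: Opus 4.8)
The plan is to read \eqref{ODE} as a first-order autonomous equation for $p$ by solving for the slope. Writing $r'(\cdot)$ and $r''(\cdot)$ for the derivatives of the rate function with respect to its scalar argument, \eqref{ODE} rearranges to
\begin{align}
\nonumber
p'(x)=\frac{h\big(p(x)\big)}{-\,p(x)\,r''\big(p(x)\big)},\qquad h(p)\triangleq(\lambda-\zeta p)\,r'(p)+\zeta\,r(p)+K.
\end{align}
Since $r$ is strictly concave ($r''<0$) and $p(x)>0$ for $x>0$, the denominator $-p\,r''(p)$ is strictly positive, so the sign of $p'(x)$ is governed entirely by $h\big(p(x)\big)$; part~(a) therefore reduces to showing $h>0$ on $[0,\infty)$.

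For this I would study $h$ directly. A short computation gives $h'(p)=(\lambda-\zeta p)\,r''(p)$, so $h$ is strictly decreasing on $[0,\lambda/\zeta)$ and strictly increasing on $(\lambda/\zeta,\infty)$; its minimum over $[0,\infty)$ is thus attained at $p=\lambda/\zeta$, where $h(\lambda/\zeta)=\zeta\,r(\lambda/\zeta)+K$. The hypothesis $K>-\zeta\,r(\lambda/\zeta)$ is precisely the assertion that this minimum is positive, whence $h(p)>0$ for all $p\ge 0$ and $p'(x)>0$ for all $x\ge 0$. (At $x=0$ the right-hand side is singular only when $p(0)=0$, in which case $-p\,r''(p)\to 0^{+}$ while $h(p)\to h(0)>0$, so $p'\to+\infty$ and $p$ leaves the origin instantly; if instead $p(0^{+})>0$ there is nothing to check.) Since $p$ is then strictly increasing it has a limit $L\in(0,\infty]$; if $L<\infty$ then $p'(x)\to h(L)/(-L\,r''(L))>0$, forcing $p\to\infty$ --- a contradiction --- so $L=\infty$, establishing~(a).

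For part~(b) I would substitute the Shannon asymptotics into the slope formula. As $p\to\infty$, $r(p)=\tfrac{1}{2\ln 2}\ln p\,(1+o(1))$, $r'(p)=\tfrac{1}{2(\ln 2)p}\,(1+o(1))$, and $r''(p)=-\tfrac{1}{2(\ln 2)p^{2}}\,(1+o(1))$; hence the numerator is dominated by the $\zeta\,r(p)$ term, $h(p)=\tfrac{\zeta}{2\ln 2}\ln p\,(1+o(1))$ (the terms $(\lambda-\zeta p)r'(p)$ and $K$ remain bounded), while $-p\,r''(p)=\tfrac{1}{2(\ln 2)p}\,(1+o(1))$. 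Therefore $p'(x)=\zeta\,p(x)\ln p(x)\,(1+o(1))$, so $\tfrac{d}{dx}\ln\ln p(x)\to\zeta$, which on integration yields $\ln\ln p(x)=\zeta x\,(1+o(1))$, i.e. $p(x)=\exp\!\big(\exp(\zeta x\,(1+o(1)))\big)$ --- doubly exponential growth. The same estimate $p'\lesssim p\ln p$ gives $\int^{\infty}\!dp/(p\ln p)=\infty$, ruling out finite-time blow-up, so the solution is defined for all $x\ge 0$, as needed for~(a).

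The bookkeeping in part~(a) is essentially routine once the identities $h'(p)=(\lambda-\zeta p)r''(p)$ and $h(\lambda/\zeta)=\zeta r(\lambda/\zeta)+K$ are spotted, the only minor care being the singularity at $p=0$ handled above. I expect the real work to be in part~(b): turning the ``$(1+o(1))$'' statements into a rigorous two-sided argument --- fixing, for each $\varepsilon>0$, an $x_{0}$ with $(\zeta-\varepsilon)\,p\ln p\le p'\le(\zeta+\varepsilon)\,p\ln p$ on $[x_{0},\infty)$, integrating the explicit scalar comparison equations $q'=(\zeta\pm\varepsilon)\,q\ln q$, and applying a comparison principle, while also verifying global existence so that integrating from $x_{0}$ is legitimate.
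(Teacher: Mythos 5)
Your proposal is correct and follows essentially the same route as the paper: solve \eqref{ODE} for $p'(x)$, observe that the numerator $h(p)$ has its global minimum $\zeta r(\lambda/\zeta)+K>0$ at $p=\lambda/\zeta$ (via $h'(p)=(\lambda-\zeta p)r''(p)$) while the denominator $-p\,r''(p)$ is positive, and then extract the doubly exponential growth from the Shannon asymptotics of the slope (the paper does this via the substitution $p/N_{0}=e^{S}$ and the limiting equation $S'=\zeta S$, which is the same computation as your $\tfrac{d}{dx}\ln\ln p\to\zeta$). Your added care about the singularity at $p=0$, the finite-limit contradiction, and global existence are refinements the paper glosses over, but they do not change the argument.
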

\begin{proof}
Solving \eqref{ODE} for $p'(x)$, we have
\begin{align}
\label{101}
p'(x)={\big(\lambda-\zeta p(x)\big)r'\big(p(x)\big)+\zeta {r\big(p(x)\big)}+K\over -{p(x)r''\big(p(x)\big)}}.
\end{align}
\begin{algorithm}[t]
\caption{Fixed Point Alg. For the Symmetric MAC}
\begin{algorithmic}[1]
\State \textbf{Initialize} $p^{(0)}(x_{j})$ with some function.
\Repeat
\State compute \eqref{asset-1} and \eqref{asset-3};
\State compute \eqref{asset-4};
\State update $p^{(N+1)}(x_{j})=\mathscr{F}(p^{(N)}(x_{j}))$ from Eq. \eqref{fix-point} for optimized values of $p^{(N+1)}(0^{+})$ and $K^{(N+1)}$;
\Until{termination criterion is satisfied}.
\end{algorithmic}
\label{Alg:2}
\end{algorithm}
From concavity of the rate function as well as the first constraint on admissible power policies we have $r''(p(x))< 0$ and $p(x)\geq 0$, respectively. Therefore, the denominator is always positive and $p'(x)>0$ for all $x\geq 0$ iff
\begin{align}
\hspace{-2mm}\label{K} K>-\big[\big(\lambda-\zeta p(x)\big)r'\big(p(x)\big)+\zeta {r\big(p(x)\big)}\big], \quad \forall x\geq 0.
\end{align}
Moreover, it can be verified that
\small\begin{align}
\nonumber
{\text{d}  \over \d p(x)}&\left[\big(\lambda-\zeta p(x)\big)r'\big(p(x)\big) +\zeta {r\big(p(x)\big)}\right]=\big(\lambda -\zeta p(x)\big)r''(p(x)).
\end{align}\normalsize
Hence, $p(x)=\lambda/\zeta$ is a global maxima for the right hand side of \eqref {K}. Replacing $p(x)=\lambda/\zeta$ into \eqref{K}, the numerator of \eqref{101} is then lower bounded by
\begin{align}
\nonumber
K+\zeta r(\lambda/\zeta)>0.
\end{align}
Furthermore, since $r(x)={1\over 2}\log_{2}\left(1+\dfrac{x}{N_{0}}\right)$, we upper bound the denominator by
\begin{align}
\nonumber
-p(x)r''(p(x))&=\frac{1/N_{0}}{2\ln 2}\frac{p(x)/N_{0}}{(1+p(x)/N_{0})^{2}}\\ \nonumber
&\leq \frac{1/N_{0}}{8\ln 2}.
\end{align}
and thus $p(x)\rightarrow +\infty$.

To prove the second part of Lemma \ref{Lemma:1}, consider the substitution
\begin{align}
p(x)/N_{0}=\exp(S(x)),
\end{align}
where $S(x)$ increases since $p(x)$ increases. Then we have
\begin{align}
\label{biggS}
S(x)\rightarrow +\infty,\ \ \text{as}\  x\rightarrow \infty.
\end{align}
Consequently, for the Shannon rate function we obtain
\begin{align}
\label{s1}
&{dr\big(p(x)\big)\over dp(x)} = {1\over {2\ln 2}}{1/N_{0}\over 1+p(x)/N_{0}} \simeq {1/N_{0}\over {2\ln 2}}{\exp(-S(x))},\\ \nonumber
&{d^{2}r\big(p(x)\big)\over dp(x)^{2}} = {-1\over {2\ln 2}}{1/N_{0}^{2}\over (1+p(x)/N_{0})^{2}}\\ &\hspace{15.8mm}\simeq {-1/N_{0}^{2}\over {2\ln 2}}{\exp(-2S(x))},
\end{align}
and
\begin{align}
\label{s2}
r(p(x))={1\over 2}\log_{2}\big(1+p(x)/N_{0}\big)\simeq {1\over 2\ln 2}S(x).
\end{align}
Replacing \eqref{s1}-\eqref{s2} in Eq. \eqref{ODE} yields that for $x\rightarrow \infty$
\begin{align}
\nonumber
&-S'(x)+{(\lambda-\zeta N_{0} e^{S(x)})}(e^{-S(x)}/N_{0}) \\ \label{s-eq2} &\hspace{30mm}+\zeta { S(x)}+{(K/2\ln 2)}=0,
\end{align}
As $x\rightarrow \infty$ and due to  \eqref{biggS}, Equation \eqref{s-eq2} reduces to
\begin{align}
\zeta S(x)=S'(x),
\end{align}
which has the following solution
\begin{align}
S(x)=A\exp(\zeta x),
\end{align}
for some constant $A$, and thus
\begin{align}
\label{O-asymptotic}
p(x)=\mathcal{O}\big(\exp(e^{\zeta x})\big),\ \text{as} \ x\rightarrow \infty.
\end{align}
\end{proof}
\begin{remark}
\label{Remark:6}
Due to Lemma 1, it is easy to verify that when $K\leq -\zeta r(\lambda/\zeta)$, the property of \eqref{O-asymptotic} does not hold in general. In fact, for sufficiently large negative $K$, solutions of Eq. \eqref{ODE} are decreasing power policies. However, we conjecture that all such power policies are suboptimal as they fail to control the overflow in the battery. A more detailed discussion will be presented in the following section.
\end{remark}

\section{Numerical Experiments}
We now study a multiple access communication system consisting of two nodes ($M=2$) with $\lambda_{1}=\lambda_{2}=\lambda=1$ and $\zeta_{1}=\zeta_{2}=\zeta=1$. Because of the symmetry of the MAC, the achievable power policies for this setting are obtained through Algorithm 2. However, to implement Algorithm 2 according to steps 1-6, one is obliged to search for optimized values of $p(0^{+})$ and $K$ at each iteration. To ease this process and in what follows, Algorithm 2 is modified in a way that once the values of $p(0^{+})$ and $K$ are initialized, the same values are used at each iteration step. \footnote{Although this approach is potentially suboptimal, it always yields achievable results, and in the case of the considered example here, the achievable results are close to the upper bound.}

 With this modification, Fig. \ref{Fig:1A} then shows the designed power policy as a function of the remaining charge in the battery with initial conditions $p(0^{+})=0.1$ and $K=0$ and initializing function $p^{(0)}(x_{k})=x_{k}+p(0^{+}), 0< x_{k}\leq L_{k}$. After $N=10$ iterations, the power policy has converged to a solution of \eqref{symm-MAC}. It can also be seen from Fig. \ref{Fig:1A} that as the remaining charge in the battery increases, the transmission power also increases rapidly. Supported by part (b) of Lemma \ref{Lemma:1}, we further conjecture that this increase is in fact doubly exponential in $x$. Indeed, when the occupied charge of the battery becomes large, the chance of overflow due to new energy arrivals increases as well. In this regard, an optimal power policy is one which consumes the battery charge fast enough such that the occurrence of overflow is traded-off against the potential suboptimality of employing a large instantaneous transmission power (see Section IV). On the other hand, for sufficiently large negative $K$, solutions of Eq. \eqref{symm-MAC} are non-increasing (see Remark \ref{Remark:6} for the point-to-point case) and they thus fail to manage battery overflow. The numerical results have further verified that for non-increasing power policies, the achieved sum-throughput is strictly less than for increasing ones. As a result, here we only consider increasing power policies.

Corresponding to the designed power policy in Figure \ref{Fig:1A}, Figure \ref{Fig:1B} shows the absolutely continuous part (density) of the probability measure in Eq. \eqref{dens-1}. In this case, consistent with our earlier observation for the power policy, the density function also falls off quickly. In terms of ergodicity, this is basically an assertion of the fact that the system spends little time with large stored charge in the battery.

\begin{table}
	\center
	\caption{Total average throughput for two identical nodes, using Shannon rate function, $r(x)={1\over 2}\log(1+x/N_{0})$, with $N_{0}=1$, equation constant $K = 0$, initializing function $p(x)=x+p(0^{+}), 0< x\leq L$, and for various storage capacity $L$ and initial values $p(0^{+})$.}
	\label{Table:1}
	\begin{tabular}{cccccc}
		\toprule
		\multicolumn{6}{c}{\hspace*{20mm}$\text{Initial Value}\ p(0^{+})$}\\
		\cline{3-6}
		$L$  & $R_{\text{upper}}$ & $0.001$ & $0.001$ & $0.1$ & $1$  \\
		\hline
		0.5    & 0.4187 & 0.3177 & 0.3152 & 0.3094  & 0.2797    \\
		1      & 0.5895 & 0.4217 & 0.4159 & 0.4069  & 0.3722    \\
		2      & 0.7243 & 0.4634 & 0.4575 & 0.4511  & 0.4075    \\
		3      & 0.7681 & 0.4652 & 0.4593 & 0.4510  & 0.4091    \\
		\bottomrule
	\end{tabular}
\end{table}
\begin{table}
	\center
	\caption{Total average throughput for two identical nodes, using Shannon rate function, $r(x)={1\over 2}\log(1+x/N_{0})$, with $N_{0}=1$, initial value $p(0^+) = 0.001$, initializing function $p(x)=x+p(0^+), 0< x\leq L$, and for various storage capacity $L$ and equation constant $K$. The upper bound for infinite storage battery ($L_{k}=\infty$) is given by $R_{\infty}={1\over 2}\log(1+2)=0.792$.}.
	\label{Table:2}
	\begin{tabular}{cccccc}
		\toprule
		\multicolumn{6}{c}{\hspace*{16mm}$\text{Equation Constant}\ K$} \\
		\cline{3-6}
		$L$ & $R_{\text{upper}}$ & $+0.5$ & $0$ & $-0.5$ & $\text{Optimum}$  \\
		\hline
		0.5   & 0.4187 & 0.3017 & 0.3177 & 0.3057 & 0.3262 (77.9\%)  \ [K=-0.15]   \\
		1     & 0.5895 & 0.3707 & 0.4217 & 0.4410 & 0.4612 (78.2\%)  \ [K=-0.37]   \\
		2     & 0.7243 & 0.3854 & 0.4634 & 0.5725 & 0.5951 (82.1\%)  \ [K=-0.63]   \\
		3     & 0.7681 & 0.3858 & 0.4652 & 0.5907 & 0.6654 (86.6\%)  \ [K=-0.67]   \\
		\bottomrule
	\end{tabular}
\end{table}

\begin{figure}[t!]
\begin{center}
 \subfigure[]{ \includegraphics[trim={.8cm .8cm .8cm .8cm}, width=1\linewidth]{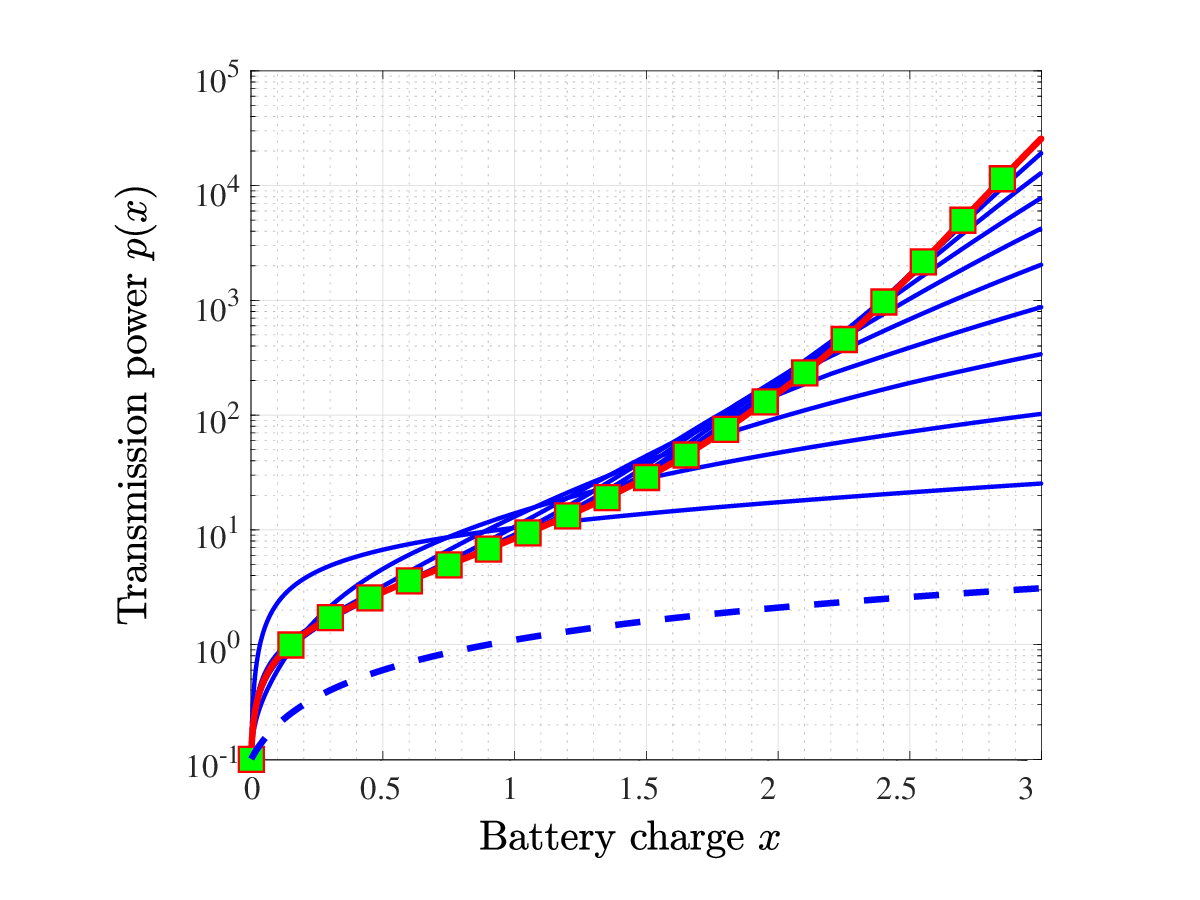}\label{Fig:1A}}\\ \vspace{5mm}
  ~ 
\subfigure[]{\includegraphics[trim={.8cm .8cm .8cm .8cm}, width=1\linewidth]{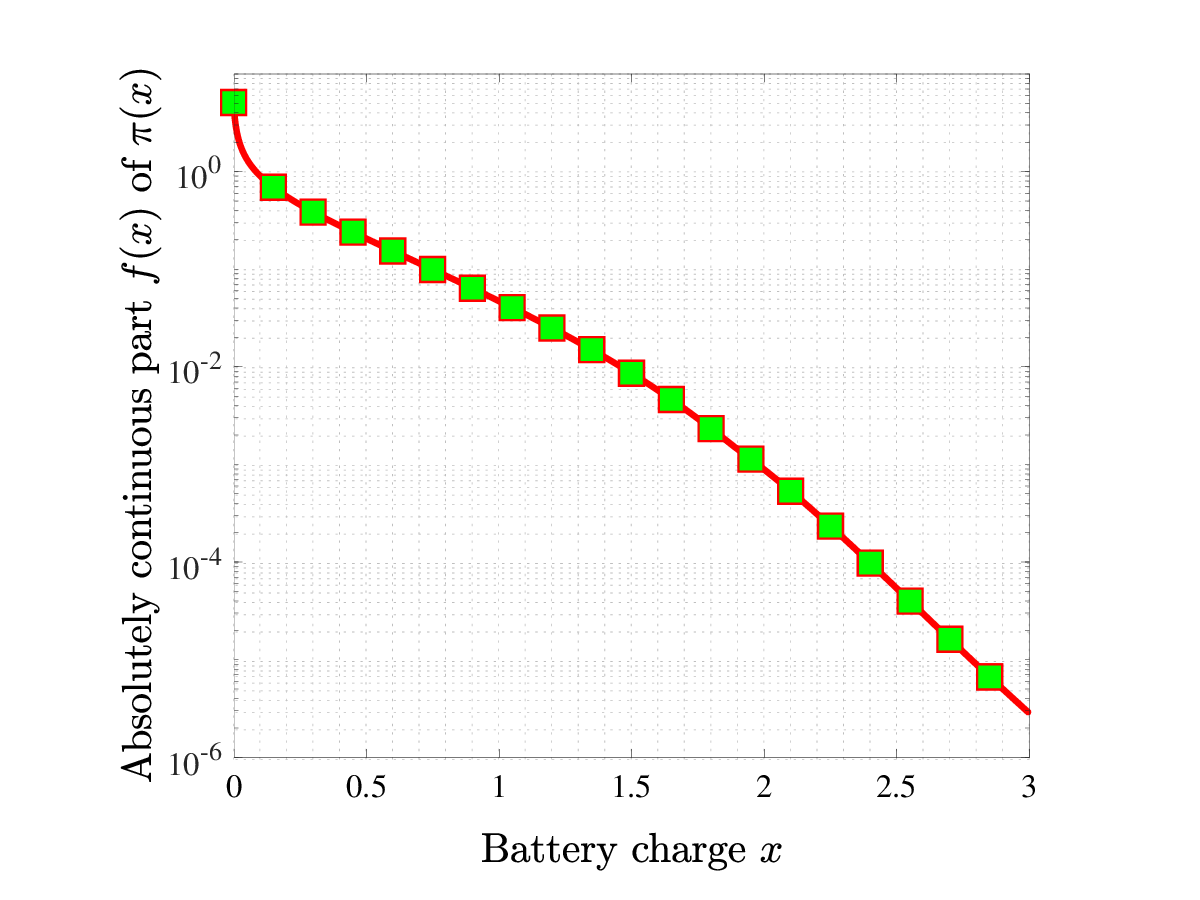}\label{Fig:1B}}
  ~ 
 \caption{\small{Battery capacity $L=3$, equation constant $K=0$, and $p(0^{+})=0.1$ for two nodes case. (a) The convergence of transmission power policy to an achievable policy (denoted by squares) after $N=10$ iterations with initializing function $p^{(0)}(x)=x+p(0^{+})$ (dashed lines) and iterates (solid lines), (b) Absolutely continuous part $f(x)$ of $\pi(x)$ for the converged solution. }}
 \end{center}
\end{figure}

 Using Algorithm \ref{Alg:2}, we have computed the achievable rates provided in Table \ref{Table:1} and Table \ref{Table:2}, where the termination criterion is taken to be
\begin{align}
\nonumber
\theta={r\big(\sum_{k=1}^{M}p^{N+1}(x_{k})\big)-r\big(\sum_{k=1}^{M}p^{N}(x_{k})\big)\over r\big(\sum_{k=1}^{M}p^{N}(x_{k})\big)} <1\%,
\end{align}
\textit{i.e.}, the iteration stops whenever the increase in rate is less than one percent. With this precision, Table \ref{Table:1} shows the sum throughput for several choices of $p(0^{+})$ and fixed $K=0$. The upper bound is also computed from \eqref{upper bound} and denoted by $R_{\text{upper}}$ in the table.

 Based on a comparison between the upper and lower limits on the rate function, it is immediate that the choice of $p(0^{+})=0.001$ results the best performance of the designed power policy. For the same choices of $p(0^{+})$ as in Table \ref{Table:1} and $K=0$, Figure \ref{Fig:2} shows the power policy solutions. Except for the case of $p(0^{+})=1$, all the power policy solutions adopt a small transmission power when the battery charge is small. Along the same lines, Table \ref{Table:2} shows the upper and lower limits on the average throughput for fixed $p(0^{+})=0.001$ and variable $K$. We have particularly provided the best value of $K$ up to precision $0.01$ as well as the corresponding achievable rates. The best achievable rate, as a percentage of the \textit{upper bound}, is also evaluated.

Finally, to show the robustness of the iterative algorithm to the initializing function, a different choice of $p^{(0)}_{k}(x_{k})$ is studied in Figure \ref{Fig:3}. Therein, we particularly have selected $p^{(0)}_{k}(x_{k})=p_{k}(0^{+}), 0< x_{k}\leq 3$ for purpose of initialization in Algorithm \ref{Alg:2} while the rest of the parameters are the same as in Figure \ref{Fig:1A}. Evidently, the power policy converges to an identical function as one depicted in Figure \ref{Fig:1A}. Similarly, the same convergence was observed when $p^{(0)}_{k}(x_{k})=p_{k}(0^{+})+\sqrt{x_{k}}, 0< x_{k}\leq 3$. In this respect, the proposed algorithm appears to be insensitive to the choice of the initial power policy.
\section{conclusion}
We have considered continuous-time power policies for a multiple access communication system where each node is capable of harvesting energy. First we modelled the battery as a compound Poisson dam, where the remaining charge in the battery modulates the transmission power. We then analysed this storage dam model in the ergodic case. In particular, we characterized an upper bound on the maximum sum-rate as a function of the energy arrivals distributions and the capacity of the batteries. For batteries with infinite capacity, we proved that any rate close to this bound is achievable by a set of constant power policies that result in stable battery behavior. For batteries with limited capacity, we showed that optimal power policies can be derived by solving a system of simultaneous partial integro-differential equations. To solve these equations, we developed an iterative algorithm based on the Gauss-Seidel approach. We next derived a fixed point algorithm for the symmetric MAC case where the multiple access nodes have identical energy harvesting statistics. Furthermore, the convergence of the utility function that results from the proposed algorithms was established. Numerical results show that for $L=3$, the achievable scheme provides throughput up to $86.6\%$ of the upper bound.

Potential future work includes extending the study to the case where each transmitter has a data buffer. This could model scenarios in which a sensor monitors a physical quantity (e.g. temperature), and then stores the data in a buffer for eventual transmission once enough energy has been harvested.

\appendices
\section{}
In the following, we show that for every power policy with memory, $p^{*}_{k}(X_{k}(u);u\leq t)$, there exist a memoryless counterpart $p_{k}(X_{k}(t))$ that attains the same or better sum-throughput performance. In particular, let $(\Omega,\mathcal{F},\prob)$ be a complete probability space and $X^{*}_{k}(t;\omega), -\infty<t<\infty$ be a stationary and ergodic stochastic process defined on this probability space and whose evolution for $t\geq 0$ is given by\footnote{For clarity, we make the dependence on $\omega\in \Omega$ explicit.}
\begin{align}
\nonumber
&X^{*}_{k}(t;\omega)=X^{*}_{k}(0;\omega)+E_{k}^{\mathrm{In}}\big((0,t];\omega\big)\\ \label{st-ext} &\hspace{30mm}-\int_{0}^{t}p^{*}_{k}\big(X^{*}_{k}(u;\omega);u\leq s\big)\ ds.
\end{align}
In conjunction with the process $X^{*}_{k}(t;\omega)$, we then define the following empirical CDFs,
\begin{align}
\label{def-empricalCDF}
\hspace{-.5mm}&\widetilde{F}^{*}_{k}(\rho_{k},x_{k};\omega)\triangleq \\ \nonumber &\lim_{T\rightarrow \infty} {1\over T}\hspace{-.5mm}\int_{0}^{T}\hspace{-2.2mm}\mathbf{1}\big(p_{k}(X^{*}_{k}(u;\omega);u\leq s)\leq \rho_{k}\big)\mathbf{1}(X^{*}_{k}(s;\omega)\leq x_{k}) ds\\
&\widetilde{\pi}^{*}_{k}(x_{k};\omega)\triangleq \lim_{T\rightarrow \infty} {1\over T}\int_{0}^{T}\mathbf{1}(X^{*}_{k}(s;\omega)\leq x_{k})ds \\
&\hspace{14mm}=\lim_{\rho_{k}\rightarrow \infty}\widetilde{F}^{*}_{k}(\rho_{k},x_{k};\omega).
\end{align}
Now since $X^{*}_{k}(t)$ is ergodic, $\widetilde{F}^{*}_{k}(\rho_{k},x_{k};\omega)$ and $\widetilde{\pi}^{*}_{k}(x_{k};\omega)$ are well defined, and constant $\prob-$almost surely on $\Omega$, \textit{i.e.},
\begin{subequations}
	\label{Eq:0111x1}
\begin{align}
\label{Eq:0111x}
\widetilde{F}^{*}_{k}(\rho_{k},x_{k};\omega)&\stackrel{\text{a.s.}}{=}F_{k}^{*}(\rho_{k},x_{k}), \\
\label{Eq:0222x}
\widetilde{\pi}^{*}_{k}(x_{k};\omega)&\stackrel{\text{a.s.}}{=}\pi_{k}^{*}(x_{k}),
\end{align}
\end{subequations}
For the functions $F_{k}^{*}(\rho_{k},x_{k})$, $\pi_{k}^{*}(x_{k})$ in Eqs. \eqref{Eq:0111x1} we define the conditional CDF $F_{k}^{*}(\rho_{k}\vert x_{k})$ by
\begin{align}
F^{*}_{k}(\rho_{k},x_{k})=\int_{0}^{x_{k}}F_{k}^{*}(\rho_{k}\vert s)\pi_{k}^{*}(ds).
\end{align}
Also, we define the memoryless power policy $p_{k}(x_{k})$ as follows
\begin{align}
\label{pstar}
p_{k}(x_{k})\triangleq \int_{0}^{\infty}\rho_{k}F^{*}_{k}(d\rho_{k}\vert x_{k}),
\end{align}
and a corresponding storage process $X_{k}(t)$ governed by
\begin{align}
\label{ss-1}
X_{k}(t)=X_{k}(0)+E_{k}^{\mathrm{In}}(0,t]-\int_{0}^{t}p_{k}\big(X_{k}(s)\big)\d s.
\end{align}
Also, we denote the stationary measure of $X_{k}(t)$ by $\pi_{k}(x_{k})$. Our objective now is to prove that the throughput using the storage process with memory $p^{*}_{k}(X^{*}_{k}(u;w),u\leq t)$ is no better than that of it's memoryless counterpart $p_{k}(X_{k}(t))$, \textit{i.e.},
\small\begin{align}
\label{tt1}
\widehat{R}(\{p_{k}^{*}(X^{*}_{k}(s;w);s\leq t)\}_{k=1}^{M}) {\leq} \widehat{R}\big(\{p_{k}(X_{k}(t))\}_{k=1}^{M}\big),
\end{align}\normalsize
almost surely. To show this result, we begin with the definition of the long term average throughput for the storage process in Eq. \eqref{st-ext}, \textit{i.e.},
\begin{figure}[t]
	\begin{center}
		\includegraphics[trim={.8cm .8cm .8cm .8cm}, width=1\linewidth]{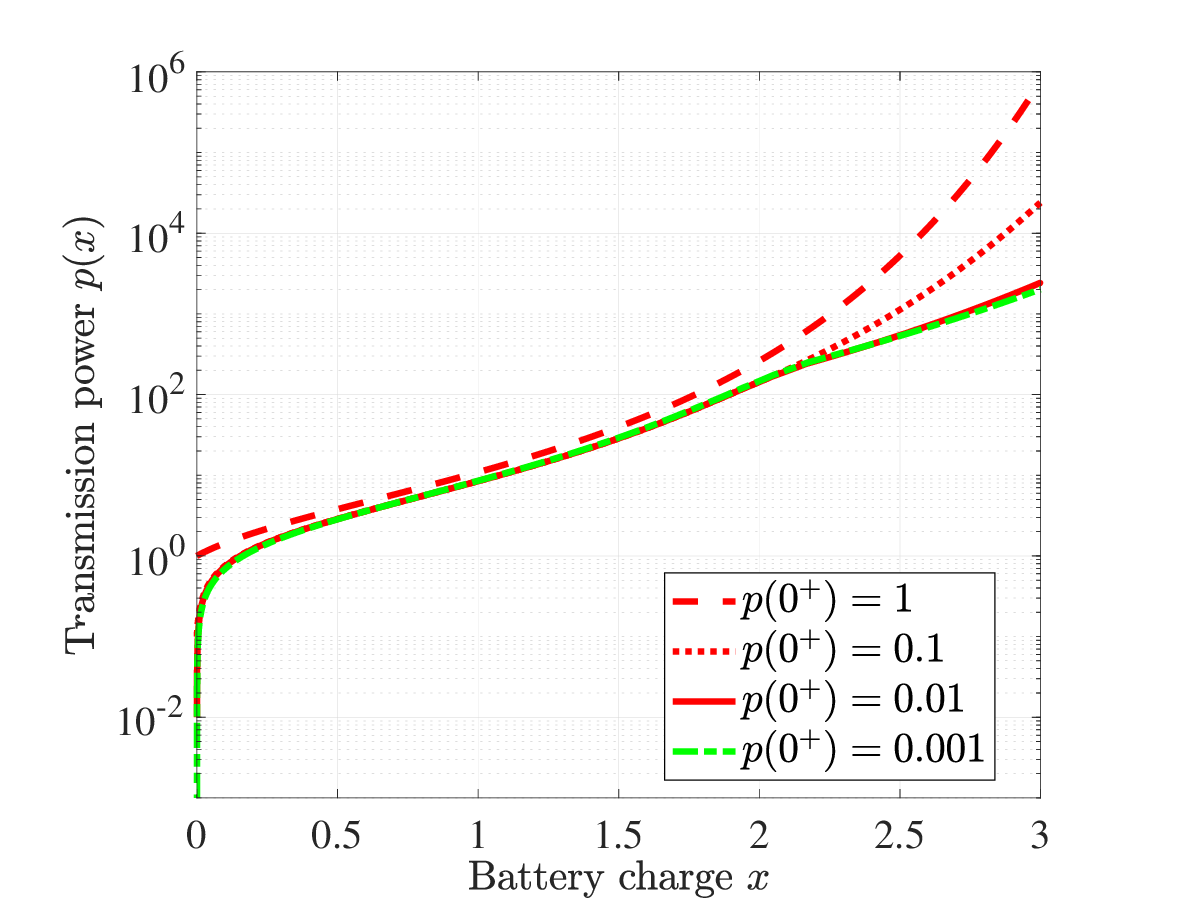}
		\vspace{2mm}
		\caption{Transmission power policies $p(x)$ with different initial values ($L=3,M=2,K=0$).}
		\label{Fig:2}
	\end{center}
\end{figure}
\begin{figure}[t]
	\begin{center}
		\centering
		\includegraphics[trim={.8cm .8cm .8cm .8cm}, width=1\linewidth]{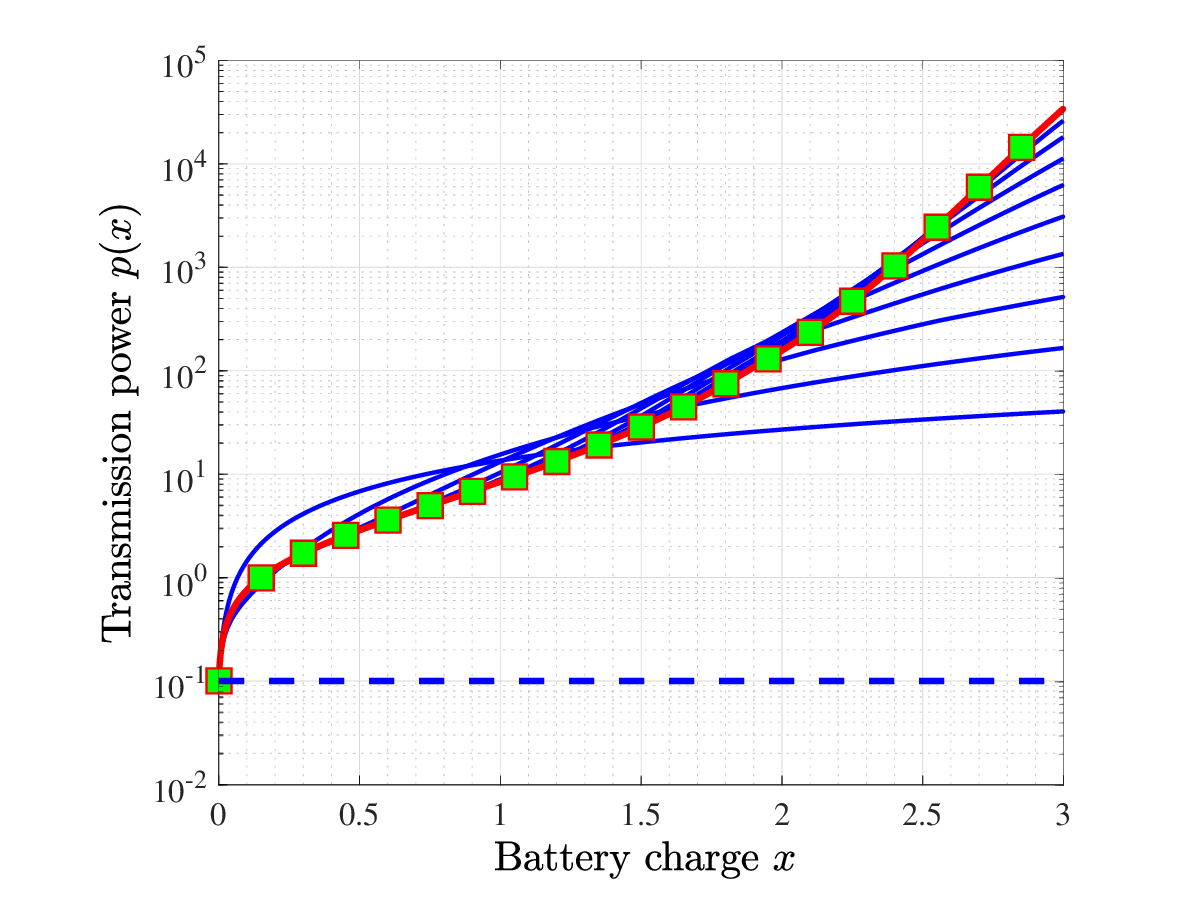}
		\vspace{.3mm}
		\caption{Robustness to the initializing function, using a constant initializing function (dashed line) $p^{(0)}(x_{k})=p(0^{+}), 0\leq x_{k}\leq L$ ($L=3,M=2,K=0$ and $N=10$).}
		\label{Fig:3}
	\end{center}
\end{figure}

\begin{align}
\nonumber
&\widehat{R}\big( \{p^{*}_{k}(X^{*}_{k}(s;w);s\leq t)\}_{k=1}^{M}\big)\\ &\triangleq \lim_{T\rightarrow \infty}{1\over T}\int_{0}^{T}r\Big(\sum_{k=1}^{M}p^{*}_{k}(X^{*}_{k}(s;w);s\leq t)\Big)\d s\\
&\stackrel{\rm{(a)}}{=} \int_{\mathcal{A}}\int_{{\mathcal{B}}}r(\sum_{k=1}^{M}\rho_{k})\prod_{k=1}^{M}\widetilde{F}^{*}_{k}(d\rho_{k},\d x_{k};w)\\
&\stackrel{\text{a.s.}}{=}\int_{\mathcal{A}}\int_{{\mathcal{B}}}r(\sum_{k=1}^{M}\rho_{k})\prod_{k=1}^{M}F^{*}_{k}(d\rho_{k},\d x_{k}),  \\
\label{up-shannon}
&=\int_{\mathcal{A}}\int_{{\mathcal{B}}}r(\sum_{k=1}^{M}\rho_{k})\prod_{k=1}^{M}F^{*}_{k}(d\rho_{k}\vert x_{k})\pi^{*}_{k}(\d x_{k}),
\end{align}
where \rm{(a)} follows from the definition of $\widetilde{F}^{*}_{k}(\d\rho_{k},\d x_{k};w)$ in Eq. \eqref{def-empricalCDF}, and $\mathcal{B}\triangleq [0,\infty)\times [0,\infty)\times \cdots \times [0,\infty)$ is the domain of integration on $\{\rho_{k}\}_{k=1}^{M}$. From concavity of the rate function, we then upper bound \eqref{up-shannon} as follows
\begin{align}
\nonumber
&\widehat{R}\big(\{p^{*}_{k}(X^{*}_{k}(s;w);s\leq t)\}_{k=1}^{M}\big)\\ \nonumber &\stackrel{\text{a.s.}}{=}\int_{\mathcal{A}}\int_{{\mathcal{B}}}r\Big(\sum_{k=1}^{M}\rho_{k}\Big)\prod_{\ell=1}^{M}F^{*}_{\ell}(d\rho_{\ell}\vert x_{\ell})\prod_{k=1}^{M}\pi^{*}_{k}(\d x_{k}),\\ \nonumber
&\leq \int_{\mathcal{A}}r(\sum_{k=1}^{M}\int_{B}\rho_{k}\prod_{\ell=1}^{M}F^{*}_{\ell}(d\rho_{\ell}\vert x_{\ell}))\prod_{k=1}^{M}\pi^{*}_{k}(\d x_{k})
\\ \label{1010}&= \int_{\mathcal{A}}r\big(\sum_{k=1}^{M}\int_{0}^{\infty}\rho_{k}F^{*}_{k}(d\rho_{k}\vert x_{k})\big)\prod_{k=1}^{M}\pi^{*}_{k}(\d x_{k})
\\ \label{343}&\triangleq \int_{\mathcal{A}}r\big(\sum_{k=1}^{M}p_{k}(x_{k})\big)\prod_{k=1}^{M}\pi^{*}_{k}(\d x_{k}),
\end{align}
where the last step follows from the definition of $p_{k}\big(x_{k}\big)$ in Eq. \eqref{pstar}. The remaining task is now to show that $\pi^{*}_{k}(x_{k})=\pi_{k}(x_{k}),\forall k\in [M]$. For this purpose, we define some notation in conjunction with an arbitrary, stationary,  \textit{c\`{a}dl\`{a}g}\footnote{Right continuous with left hand limit. Note that both the storage processes in Eqs. \eqref{st-ext} and \eqref{ss-1} are c\`{a}dl\`{a}g.} process $Y(t)$ whose jumps (positive or negative) occur at time instants $T^{0},T^{1},\cdots$. In particular, the right hand derivative of $Y(t)$ is defined by
\begin{align}
Y^{+}(t)\triangleq \lim_{\varepsilon \downarrow 0}{Y(t+\varepsilon)-Y(t)\over \varepsilon}.
\end{align}
In addition, define
\begin{align}
Y(t^{-})\triangleq \lim_{\varepsilon \downarrow 0} Y(t-\varepsilon).
\end{align}
\begin{theorem}(\textit{Rate Conservation Law})
\label{Theorem:1}
Let $Y(t)$ be an ergodic, stationary, \textit{c\`{a}dl\`{a}g} process. Then,
\begin{align}
\label{general_level_crossing}
&f(y)\expect\big[Y^{+}(t)\vert Y(t)=y\big]= \\ \nonumber &\lambda^{0}\expect^{0}\big[\mathbf{1}_{\{Y(T^{0,-})>y\}}\mathbf{1}_{\{Y(T^{0})<y\}}-\mathbf{1}_{\{Y(T^{0,-})<y\}}\mathbf{1}_{\{Y(T^{0})>y\}}\big],
\end{align}
where $f(y)$ is the probability density at $y$, and $\expect^{0}$ denotes the expectation with respect to the Palm probability distribution corresponding to the point process (with assumed intensity $\lambda^{0}$) for the jumps. $\hfill\square$
\end{theorem}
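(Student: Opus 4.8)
Equation~\eqref{general_level_crossing} is a level--crossing (rate conservation) identity, and the plan is to prove it by balancing, over a long horizon $[0,T]$, the upward crossings of a fixed level $y$ against the downward crossings and then letting $T\to\infty$ by ergodicity. Fix $y$ with $f(y)>0$ and let $U_T(y),D_T(y)$ be the numbers of upward, respectively downward, crossings of $y$ by the path of $Y$ on $[0,T]$. Decompose each into a \emph{continuous} part ($U_T^{c},D_T^{c}$), where the path passes through $y$ while varying continuously, and a \emph{jump} part ($U_T^{j},D_T^{j}$), where some jump time $T^{n}$ has $Y(T^{n,-})<y<Y(T^{n})$ (upward) or $Y(T^{n,-})>y>Y(T^{n})$ (downward). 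Since upward and downward crossings of a fixed level alternate, $\lvert U_T(y)-D_T(y)\rvert\le 1$; dividing by $T$ and letting $T\to\infty$ therefore gives
\begin{align}
\nonumber
\lim_{T\to\infty}\frac{U_T^{c}(y)-D_T^{c}(y)}{T}=\lim_{T\to\infty}\frac{D_T^{j}(y)-U_T^{j}(y)}{T}.
\end{align}

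For the continuous part I would use a Rice/occupation--density argument: for almost every $y$ the signed rate (up-crossings minus down-crossings) at which a continuously varying path sweeps past $y$ is the occupation density at $y$ times the conditional mean right--derivative, and ergodicity and stationarity of $Y$ turn the time average into the stationary expectation, so that
\begin{align}
\nonumber
\lim_{T\to\infty}\frac{U_T^{c}(y)-D_T^{c}(y)}{T}\stackrel{\text{a.s.}}{=}f(y)\,\mathbb{E}[Y^{+}(t)\mid Y(t)=y];
\end{align}
the assumed existence of the density $f$ excludes a positive fraction of time during which the path merely touches level $y$ without crossing. For the jump part, the marks $\{(Y(T^{n,-}),Y(T^{n}))\}_{n}$ carried by the jump point process form a stationary marked point process, so the ergodic theorem for marked point processes (Palm calculus with intensity $\lambda^{0}$) yields
\begin{align}
\nonumber
\lim_{T\to\infty}\frac{U_T^{j}(y)}{T}=\lambda^{0}\mathbb{E}^{0}\big[\mathbf{1}_{\{Y(T^{0,-})<y\}}\mathbf{1}_{\{Y(T^{0})>y\}}\big],\qquad
\lim_{T\to\infty}\frac{D_T^{j}(y)}{T}=\lambda^{0}\mathbb{E}^{0}\big[\mathbf{1}_{\{Y(T^{0,-})>y\}}\mathbf{1}_{\{Y(T^{0})<y\}}\big].
\end{align}
Substituting the continuous-- and jump--crossing rates into the balance relation of the first paragraph yields exactly~\eqref{general_level_crossing}.

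A cleaner and fully rigorous route, which I would prefer for the write-up, is to apply the rate conservation law (Miyazawa's principle: a stationary process $Z$ with integrable right--derivative satisfies $\mathbb{E}[Z^{+}(t)]+\lambda^{0}\mathbb{E}^{0}[Z(T^{0})-Z(T^{0,-})]=0$) to $Z(t)=H(Y(t))$ for $H$ bounded and continuously differentiable with bounded derivative. The chain rule gives $\mathbb{E}[Z^{+}(t)]=\int H'(u)\,f(u)\,\mathbb{E}[Y^{+}(t)\mid Y(t)=u]\,du$, so the law reads $\int H'(u)\,f(u)\,\mathbb{E}[Y^{+}(t)\mid Y(t)=u]\,du=-\lambda^{0}\mathbb{E}^{0}[H(Y(T^{0}))-H(Y(T^{0,-}))]$. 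Taking $H$ from an approximate identity that decreases from $1$ to $0$ across a shrinking neighbourhood of $y$ and passing to the limit (dominated convergence on the right, Lebesgue differentiation on the left, using $\mathbb{E}\lvert Y^{+}(t)\rvert<\infty$) recovers~\eqref{general_level_crossing} at every Lebesgue point, hence for a.e.\ $y$. The open/closed inequality conventions are immaterial since $f$ being a density forces $\mathbb{P}\{Y(t)=y\}=0$ and $\mathbb{P}^{0}\{Y(T^{0})=y\}=0$ for a.e.\ $y$.

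The step I expect to be the main obstacle is the rigorous justification of the continuous--crossing (Rice) contribution in this generality: one must control the sample path between jumps with enough regularity for the occupation--density formula, discard the grazing contributions, verify $\mathbb{E}\lvert Y^{+}(t)\rvert<\infty$ so that all terms are finite, and justify interchanging $\lim_{T\to\infty}$ with the decomposition of crossings into continuous and jump parts. Once these technicalities are settled, the balance identity and the Palm computation of the jump rates are routine. In the application to $Y=X_{k}$ these hypotheses all hold: between jumps $X_{k}$ is absolutely continuous with $X_{k}^{+}=-p_{k}(X_{k})$, which is bounded by the admissibility conditions, and arrivals are Poisson.
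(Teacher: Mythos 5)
Your proposal is correct and is essentially the paper's own approach: the paper does not prove Theorem~\ref{Theorem:1} at all, but simply cites the rate conservation law in Mazumdar's monograph, which is exactly the route you carry out in your second (preferred) argument via smooth test functions $H$ approximating $\mathbf{1}_{(-\infty,y]}$. Your first, elementary crossings-balance sketch is the standard heuristic underlying the RCL and adds intuition, but the substance matches the cited result.
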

\begin{proof}
The proof can be found in \cite[p. 36]{Mazumdar1}.
\end{proof}
\begin{remark}
The term $\mathbf{1}_{\{Y(T^{0,-})>y\}}\mathbf{1}_{\{Y(T^{0})<y\}}$ in the right hand side of Theorem \ref{Theorem:1} corresponds to negative jumps in the sample path while $\mathbf{1}_{\{Y(T^{0,-})<y\}}\mathbf{1}_{\{Y(T^{0})>y\}}$ corresponds to positive jumps.
\end{remark}
\begin{remark}
\label{Remark:8}
As the memoryless storage process in Eq. \eqref{ss-1} only contains positive jumps,
\begin{align}
\nonumber
\expect^{0}\left[\mathbf{1}_{\{X_{k}(T_{k}^{0,-})>x_{k}\}}\mathbf{1}_{\{X_{k}(T_{k}^{0})<x_{k}\}}\right]=0,
\end{align}
where as defined in Section \ref{section:Preliminaries}-B,  $T^{0}_{k},T^{1}_{k},\cdots$ denote the energy arrival times for the $k^{th}$ node. In this special case, we then have
\footnotesize\begin{align}
\nonumber
\hspace{-8mm}f_{k}(x_{k})\expect[X_{k}^{+}(t)\vert X_{k}(t)=x_{k}]
&=f_{k}(x_{k})\expect[-p_{k}(X_{k}(t))\vert X_{k}(t)=x_{k}]\\ \label{1368}
&=-f_{k}(x_{k})p_{k}(x_{k}).
\end{align}\normalsize
For the right hand side of Theorem \ref{Theorem:1} we obtain
\begin{align}
\nonumber
&\lambda^{0}\expect^{0}[-\mathbf{1}_{\{X_{k}(T^{0,-})<x_{k}\}}\mathbf{1}_{\{X_{k}(T^{0})>x_{k}\}}]\\ 
&\stackrel{\rm{(a)}}{=}\lambda_{k} \expect[-\mathbf{1}_{\{X_{k}(T^{0,-})<x_{k}\}}\mathbf{1}_{\{X_{k}(T^{0})>x_{k}\}}]\\
&=-\lambda_{k}\int_{0}^{x_{k}}\big(1-B_{k}(x_{k}-v_{k})\big)\pi_{k}(\d v_{k})\\ \nonumber
&=-\lambda_{k}\Big[\big(1-B_{k}(x_{k})\big)\pi_{k}^{0}\\  \label{1369} &\hspace{13mm}+\int_{0^{+}}^{x_{k}}\big(1-B_{k}(x_{k}-v_{k})\big)f(v_{k}) \d v_{k}\Big],
\end{align}
where \rm{(a)} follows from the notion of Poisson Arrivals See Time Averages (PASTA) \cite[Prop. 1.23]{Mazumdar1} for Poisson energy arrival process  with intensity $\lambda^{0}=\lambda_{k}$.
Equating \eqref{1368} and \eqref{1369} according to Theorem \ref{Theorem:1}, we obtain
\begin{align}
\nonumber
\hspace{-3mm}f_{k}(x_{k})p_{k}(x_{k})&=\lambda_{k}\Big[\big(1-B_{k}(x_{k})\big)\pi_{k}^{0}\\ \label{hh-1} &+\int_{0^{+}}^{x_{k}}\big(1-B_{k}(x_{k}-v_{k})\big)f_{k}(v_{k}) \d v_{k}\Big],
\end{align}
which is the equilibrium condition in Eq. \eqref{reflecct} with the density $f_{k}(x_{k})$ and the atom $\pi_{k}^{0}$.
\end{remark}

Returning to the storage process with memory in Eq. \eqref{st-ext}, now it is also easy to see that
\begin{align}
\expect\big[(X_{k}^{*}(t))^{+}\vert X_{k}^{*}(t)=x_{k}\big]&= \int_{0}^{\infty}\rho_{k} F_{k}(\d\rho_{k}\vert x_{k})\\
&\triangleq p_{k}(x_{k}),
\end{align}
which is simply the average rate of down crossing at level $x_{k}$ corresponding to the stationary distribution of $X^{*}_{k}(t)$ in Eq. \eqref{st-ext}. Using the argument in Remark \ref{Remark:8} for the process in Eq. \eqref{st-ext} results
\begin{align}
\nonumber
&f^{*}_{k}(x_{k})\expect\big[(X_{k}^{*}(t))^{+}\vert X_{k}^{*}(t)=x_{k}\big]=f^{*}_{k}(x_{k})p_{k}(x_{k})\\ \nonumber
&=\lambda_{k}\bigg[\big(1-B_{k}(x_{k})\big)\pi^{*,0}_{k}\\ \label{hh-2} &\hspace{10mm}+\int_{0^{+}}^{x_{k}}\big(1-B_{k}(x_{k}-v_{k})\big)f_{k}^{*}(v_{k}) \d  v_{k}\bigg],
\end{align}
where $\pi^{*,0}_{k}$ and $f_{k}^{*}(x_{k})$ are the atom and the continuous part (density) of the probability measure $\pi_{k}^{*}(x_{k})$. Since from Theorem 1 the probability measure that solves \eqref{hh-1} and \eqref{hh-2} is unique,
\begin{align}
\pi^{*}_{k}(x_{k})=\pi_{k}(x_{k}), \ \ \ \forall x_{k}.
\end{align}
Concluding from \eqref{343}, we thus showed that
\begin{align}
\nonumber
\widehat{R}\big( \{p^{*}_{k}(X^{*}_{k}(s;w);s\leq t)\}_{k=1}^{M}\big)
&\stackrel{\text{\text{a.s.}}}{\leq} \int_{\mathcal{A}}r\big(\sum_{k=1}^{M}p_{k}(x_{k})\big)\pi^{*}_{k}(\d x_{k}), \\ \nonumber
&= \int_{\mathcal{A}}r\big(\sum_{k=1}^{M}p_{k}(x_{k})\big)\pi_{k}(\d x_{k})\\ \nonumber
&\stackrel{\text{a.s.}}{=}\widehat{R}\big(\{p_{k}(X_{k}(t))\}_{k=1}^{M}\big), 
\end{align}
\begin{remark}
We note that in the ergodic regime, the upcrossing rate as well as the drift component of the storage process in the finite battery case also obey the law stated in Theorem \ref{Theorem:1}. Thus, we again obtain Eq. \eqref{hh-2} as battery overflow does not change the upward and downward rates. Therefore, a similar proof can be used to show $X_{k}(t)$ is a sufficient statistic for optimal power policies in the storage model with a finite battery capacity in Eq. \eqref{storage model1}.
\end{remark}

\section{}
\begin{proof}
Since for all $ k\neq j$,
\begin{align}
\nonumber
(\pi_{k}^{0,\alpha},g_{k}^{\alpha}(x_{k}))=(\pi_{k}^{0,1},{g}_{k}^{1}(x_{k}))=({\pi}_{k}^{0,2},{g}_{k}^{2}(x_{k})),
\end{align}
we have that
\begin{align}
\nonumber
\pi_{k}^{\alpha}(\d x_{k})=\pi_{k}^{1}(\d x_{k})={\pi}^{2}_{k}(\d x_{k}).
\end{align}
Then,
\small\begin{align}
\nonumber
&\widehat{R}_{j}^{\alpha}=\int_{\mathcal{A}}r\bigg(\lambda_{j}{ {G^{\alpha}_{j}(x_{j})}\over {g^{\alpha}_{j}(x_{j})}}+\hspace{-3mm}\sum_{k\in [M]-j}\hspace{-3mm}\lambda_{k}{G_{k}(x_{k})\over g_{k}(x_{k})}\bigg) \pi_{j}^{\alpha}(\d x_{j})\hspace{-3mm}\prod_{k\in [M]-j}\hspace{-2.5mm}\pi_{k}(\d x_{k}) \\  &=\expect_{j}\bigg[\int_{0}^{L_{j}}r\bigg(\lambda_{j}{ {G^{\alpha}_{j}(x_{j})}\over {g^{\alpha}_{j}(x_{j})}}+\sum_{k\in [M]-j}\hspace{-2mm}\lambda_{k}{G_{k}(x_{k})\over g_{k}(x_{k})}\bigg)\pi_{j}^{\alpha}(\d x_{j})\bigg]\\ \nonumber
&=\expect_{j}\bigg[\int_{0}^{L_{j}}r\bigg(\lambda_{j}{ {G^{\alpha}_{j}(x_{j})}\over {g^{\alpha}_{j}(x_{j})}}+\sum_{k\in [M]-j}\lambda_{k}{G_{k}(x_{k})\over g_{k}(x_{k})}\bigg)\\ &\hspace{4mm} \times [\pi_{j}^{0,\alpha}\delta(x_{j})+e^{-\zeta_{j}x_{j}}g^{\alpha}_{j}(x_{j})]\d x_{j}\bigg]\\ \nonumber
&=\expect_{j}\bigg[\hspace{-1mm}\int_{0}^{L_{j}}\hspace{-3mm}r\bigg(\lambda_{j}{ {G^{\alpha}_{j}(x_{j})}\over {g^{\alpha}_{j}(x_{j})}}+\hspace{-3mm}\sum_{k\in [M]-j}\hspace{-3mm}\lambda_{k}{G_{k}(x_{k})\over g_{k}(x_{k})}\bigg)e^{-\zeta_{j}x_{j}}g^{\alpha}_{j}(x_{j})\d x_{j}\bigg]
\\ \label{a number} &\hspace{4mm}+\pi_{j}^{0,\alpha}\expect_{j}\bigg[r\bigg(\sum_{k\in [M]-j}\hspace{-2mm}\lambda_{k}{G_{k}(x_{k})\over g_{k}(x_{k})}\bigg)\bigg].
\end{align}\normalsize
For the term inside the first expectation in Eq. \eqref{a number}, we proceed as \eqref{Stan:01}-\eqref{Stan:06} on the next page,
\begin{figure*}[!t]
\normalsize
\begin{align}
\nonumber
&\int_{0}^{L_{j}}r\bigg(\lambda_{j}{ {G^{\alpha}_{j}(x_{j})}\over {g^{\alpha}_{j}(x_{j})}}+\sum_{k\in [M]-j}\lambda_{k}{G_{k}(x_{k})\over g_{k}(x_{k})}\bigg) e^{-\zeta_{j}x_{j}}g^{\alpha}_{j}(x_{j})\d x_{j}
\\ \label{Stan:01} & =\int_{0}^{L_{j}}r\bigg(\lambda_{j}{\alpha {G^{1}_{j}(x_{j})+\bar{\alpha}{G}^{2}_{j}(x_{j})}\over {\alpha g^{1}_{j}(x_{j})+\bar{\alpha}{g}^{2}_{j}(x_{j})}}+\sum_{k\in [M]-j}{\lambda_{k} G_{k}(x_{k})\over g_{k}(x_{k})}\bigg) \big({\alpha e^{-\zeta_{j}x_{j}}g_{j}^{1}(x_{j})+\bar{\alpha} e^{-\zeta_{j}x_{j}}{g}^{2}_{j}(x_{j})}\big)\d x_{j}\\ 
&=\int_{0}^{L_{j}}r\bigg(\lambda_{j}{\alpha e^{-\zeta_{j}x_{j}}{G^{1}_{j}(x_{j})+\bar{\alpha}e^{-\zeta_{j}x_{j}}{G}^{2}_{j}(x_{j})}\over {\alpha e^{-\zeta_{j}x_{j}}g^{1}_{j}(x_{j})+\bar{\alpha}e^{-\zeta_{j}x_{j}}{g}^{2}_{j}(x_{j})}}+\sum_{k\in [M]-j}{\lambda_{k} G_{k}(x_{k})\over g_{k}(x_{k})}\bigg) \big({\alpha e^{-\zeta_{j}x_{j}}g_{j}^{1}(x_{j})+\bar{\alpha} e^{-\zeta_{j}x_{j}}{g}^{2}_{j}(x_{j})}\big)\ d x_{j}\\ \nonumber 
& {\geq} \int_{0}^{L_{j}} r\bigg(\lambda_{j}{{\alpha e^{-\zeta_{j}x_{j}}G_{j}^{1}(x_{j})}\over {\alpha e^{-\zeta_{j}x_{j}}g_{j}^{1}(x_{j})}}+\sum_{k\in [M]-j} {\lambda_{k} G_{k}(x_{k})\over g_{k}(x_{k})}\bigg)\alpha e^{-\zeta_{j}x_{j}}g_{j}^{1}(x_{j})\ d x_{j}\\ & \hspace{4mm}\label{eq:harvardtarokh}+ \int_{0}^{L_{j}} r\bigg(\lambda_{j}{{\bar{\alpha}e^{-\zeta_{j}x_{j}}{G}^{2}_{j}(x_{j})}\over {\bar{\alpha}e^{-\zeta_{j}x_{j}}{g}^{2}_{j}(x_{j})}}+\sum_{k\in [M]-j}{\lambda_{k} {G}_{k}(x_{k})\over {g}_{k}(x_{k})}\bigg) \bar{\alpha} e^{-\zeta_{j}x_{j}}{g}^{2}_{j}(x_{j}) d x_{j}
\\  \nonumber
& = \alpha \int_{0}^{L_{j}} r\bigg(\lambda_{j}{{G_{j}^{1}(x_{j})}\over {g_{j}^{1}(x_{j})}}+\sum_{k\in [M]-j} {\lambda_{k} G_{k}(x_{k})\over g_{k}(x_{k})}\bigg) e^{-\zeta_{j}x_{j}}g_{j}^{1}(x_{j})\ d x_{j}\\ \label{Stan:06} & \hspace{4mm}+ \bar{\alpha}\int_{0}^{L_{j}} r\bigg(\lambda_{j}{{{{G}^{2}_{j}(x_{j})}\over {g}^{2}_{j}(x_{j})}}+\sum_{k\in [M]-j}{\lambda_{k} {G}_{k}(x_{k})\over {g}_{k}(x_{k})}\bigg)  e^{-\zeta_{j}x_{j}}{g}^{2}_{j}(x_{j}) d x_{j},
\end{align}
\hrule
\begin{align}
\nonumber
\hspace{-27mm}\expect_{j}\bigg[\int_{0}^{L_{j}}r\bigg(\lambda_{j}{ {G^{\alpha}_{j}(x_{j})}\over {g^{\alpha}_{j}(x_{j})}}&+\sum_{k\in [M]-j}\lambda_{k}{G_{k}(x_{k})\over g_{k}(x_{k})}\bigg)e^{-\zeta_{j}x_{j}}g^{\alpha}_{j}(x_{j})\d x_{j}\bigg]\\
\nonumber &\geq \alpha \expect_{j}\bigg[\int_{0}^{L_{j}}r\bigg(\lambda_{j}{ {G_{j}^{1}(x_{j})}\over {g_{j}^{1}(x_{j})}}+\sum_{k\in [M]-j}\lambda_{k}{G_{k}(x_{k})\over g_{k}(x_{k})}\bigg)e^{-\zeta_{j}x_{j}}g_{j}^{1}(x_{j})\d x_{j}\bigg]\\ \label{119} &\hspace{4mm}+\bar{\alpha}\expect_{j}\bigg[\int_{0}^{L_{j}}r\bigg(\lambda_{j}{ {{G}^{2}_{j}(x_{j})}\over {{g}^{2}_{j}(x_{j})}}+\sum_{k\in [M]-j}\lambda_{k}{G_{k}(x_{k})\over g_{k}(x_{k})}\bigg) e^{-\zeta_{j}x_{j}}{g}^{2}_{j}(x_{j})\d x_{j}\bigg].
\end{align}
\hrule
\end{figure*}
where \eqref{eq:harvardtarokh} can be verified via the lemma given in Appendix C and choosing
\begin{align}
\nonumber
&a_{1}=\alpha e^{-\zeta_{j}x_{j}}g_{j}^{1}(x_{j}),\ \ \ a_{2}=\bar{\alpha} e^{-\zeta_{j}x_{j}}{g}^{2}_{j}(x_{j}),\\ \nonumber
&b_{1}=\alpha e^{-\zeta_{j}x_{j}}G_{j}^{1}(x_{j}),\ \ \ b_{2}=\bar{\alpha} e^{-\zeta_{j}x_{j}}G^{2}_{j}(x_{j}),
\end{align}
and
\begin{align}
\nonumber
&\gamma = \lambda_{j},\ \ \ \ \ \beta = \sum_{{k\in [M]-j}}{\lambda_{k} G_{k}(x_{k})\over g_{k}(x_{k})}.
\end{align}
Therefore, for the first term of \eqref{a number} we obtain Eq. \eqref{119}. 

Splitting the second term of \eqref{a number} as
\small\begin{align}
\nonumber 
\pi_{j}^{0,\alpha}\expect_{j}\bigg[r\bigg(\sum_{k\in [M]-j}\hspace{-3mm}\lambda_{k}{G_{k}(x_{k})\over g_{k}(x_{k})}\bigg)\bigg]&=\alpha \pi_{j}^{0,1} \expect_{j}\bigg[r\bigg(\sum_{k\in [M]-j}\hspace{-3mm}\lambda_{k}{G_{k}(x_{k})\over g_{k}(x_{k})}\bigg)\bigg]\\ \label{120} &+\bar{\alpha}{\pi}_{j}^{0,2}\expect_{j}\bigg[r\bigg(\sum_{k\in [M]-j}\hspace{-3mm}\lambda_{k}{G_{k}(x_{k})\over g_{k}(x_{k})}\bigg)\bigg],
\end{align}\normalsize
and combining  \eqref{119} and \eqref{120} we derive
\begin{align}
\widehat{R}_{j}^{\alpha}\geq \alpha \widehat{R}^{1}_{j}+\bar{\alpha}\widehat{R}^{2}_{j}.
\end{align}
\end{proof}
\section{}
\begin{lemma}
Let $\gamma,\beta>0$, $a_{k}>0$ and $b_{k}>0$ be given. Then
\begin{align}
\sum_{k}a_{k}r\left(\gamma {b_{k}\over a_{k}} +\beta\right)\leq  ar\left(\gamma {b\over a} +\beta\right),
\end{align}
where $a=\sum_{k}a_{k}$ and $b=\sum_{k}b_{k}$.\hfill$\square$
\end{lemma}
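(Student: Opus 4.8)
The plan is to reduce the claim to a single application of Jensen's inequality for the concave rate function $r$. First I would normalize by the total mass: set $a=\sum_k a_k>0$ and introduce the weights $w_k\triangleq a_k/a$, which are strictly positive and satisfy $\sum_k w_k=1$, so that $\{w_k\}$ is a probability vector. Rewriting the left-hand side gives
\begin{align}
\nonumber
\sum_k a_k\, r\Big(\gamma \frac{b_k}{a_k}+\beta\Big) = a\sum_k w_k\, r\Big(\gamma \frac{b_k}{a_k}+\beta\Big).
\end{align}
Since $a_k,b_k>0$ and $\gamma,\beta>0$, each argument $\gamma b_k/a_k+\beta$ is strictly positive, so $r$ is defined (and three times continuously differentiable) there.

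Next I would record the arithmetic identity that makes the argument collapse: the convex combination, with weights $w_k$, of the arguments $\gamma b_k/a_k+\beta$ equals the target argument, because the $a_k$ in the denominator cancels the $a_k$ coming from $w_k$:
\begin{align}
\nonumber
\sum_k w_k\Big(\gamma \frac{b_k}{a_k}+\beta\Big) = \gamma\sum_k \frac{a_k}{a}\cdot\frac{b_k}{a_k}+\beta\sum_k w_k = \frac{\gamma}{a}\sum_k b_k+\beta = \gamma\frac{b}{a}+\beta,
\end{align}
where we used $b=\sum_k b_k$ and $\sum_k w_k=1$.

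Finally, invoking the concavity of $r$ (property [Concavity] of Section~\ref{section:Preliminaries}, $r''<0$), Jensen's inequality for the probability vector $\{w_k\}$ yields
\begin{align}
\nonumber
\sum_k w_k\, r\Big(\gamma \frac{b_k}{a_k}+\beta\Big) \le r\Big(\sum_k w_k\big(\gamma \tfrac{b_k}{a_k}+\beta\big)\Big) = r\Big(\gamma\frac{b}{a}+\beta\Big),
\end{align}
and multiplying through by $a>0$ gives $\sum_k a_k\, r(\gamma b_k/a_k+\beta)\le a\, r(\gamma b/a+\beta)$, as claimed.

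There is essentially no obstacle: concavity of $r$ is assumed throughout Section~\ref{Section:Bounds on Total Average Throughput}, and the only thing to check is the displayed arithmetic identity. (Equivalently, one may recognize $h(a,b)\triangleq a\, r(\gamma b/a+\beta)$ as a positive multiple of the perspective of the concave map $u\mapsto r(\gamma u+\beta)$, hence jointly concave in $(a,b)$ and positively homogeneous of degree one, and therefore superadditive; an induction on the number of indices then produces the inequality. The Jensen computation above is just the one-line version of this observation, so I would present it that way.)
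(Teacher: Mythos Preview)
Your proof is correct and, in fact, slightly more direct than the paper's. The paper instead normalizes by $b$ rather than $a$: it introduces the auxiliary function $V(x)\triangleq x\,r\big((\gamma/x)+\beta\big)$, verifies that $V''(x)=(\gamma^{2}/x^{3})\,r''(\gamma/x+\beta)<0$ so that $V$ is concave, rewrites the left-hand side as $b\sum_{k}(b_{k}/b)\,V(a_{k}/b_{k})$, and then applies Jensen to $V$ with the weights $b_{k}/b$. Your route avoids the need to define $V$ and check its concavity separately, since the choice $w_{k}=a_{k}/a$ makes the convex combination of the arguments collapse to $\gamma b/a+\beta$ by the arithmetic identity you displayed, after which a single application of Jensen to $r$ itself suffices. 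The paper's approach is essentially the perspective-function argument you sketch in your closing parenthetical, carried out explicitly; your main argument is the more elementary one-line version. Both are valid and rely on the same underlying concavity assumption.
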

\begin{proof}
we define the function $V(x)\triangleq xr\big({(\gamma/ x)}+\beta \big)$ which is known to be concave for all $x> 0$ since
\begin{align}
\nonumber
V''(x)={\gamma^{2}\over x^{3}}r''\left({\gamma\over x}+\beta\right)<0,
\end{align}
where the concavity property of the rate function has been used. We then proceed as
\begin{align}
\nonumber
\sum_{k}a_{k}r\left(\gamma {b_{k}\over a_{k}} +\beta\right)&=\sum_{k} b_{k}(a_{k}/b_{k})r\left(\gamma {b_{k}\over a_{k}} +\beta\right)\\ \nonumber
&=\sum_{k} b_{k}V(a_{k}/b_{k})\\ \nonumber
&=b\sum_{k} (b_{k}/b)V(a_{k}/b_{k}).
\end{align}
Furthermore, from concavity of $V(x)$,
\begin{align}
\nonumber
b\sum_{k} (b_{k}/b)V(a_{k}/b_{k})&\leq b V(\sum_{k}b_{k}/b \times a_{k}/b_{k})\\ \nonumber
&=bV(a/b)\\ \nonumber
&=  ar\left(\gamma {b\over a} +\beta\right).
\end{align}
Hence,
\begin{align}
\nonumber
\sum_{k}a_{k}r\left(\gamma {b_{k}\over a_{k}} +\beta\right) \leq ar\left(\gamma {b\over a} +\beta\right).
\end{align}
\end{proof}
\bibliographystyle{IEEEtran}
\bibliography{mybib1}

\begin{filecontents}{mybib1.bib}
@book{Cover2006,
	added-at = {2009-04-20T21:27:16.000+0200},
	at = {2008-03-31 06:17:47},
	author = {Cover, Thomas M. and Thomas, Joy A.},
	biburl = {http://www.bibsonomy.org/bibtex/22e9bfa879286689a14feb55b69d326c1/ywhuang},
	howpublished = {Hardcover},
	id = {1877660},
	interhash = {87ae368776946bf7a71ee476e81a2191},
	intrahash = {2e9bfa879286689a14feb55b69d326c1},
	isbn = {0471241954},
	keywords = {information-theory book},
	month = {July},
	priority = {0},
	publisher = {Wiley-Interscience},
	timestamp = {2009-04-20T21:27:16.000+0200},
	title = {Elements of Information Theory 2nd Edition (Wiley Series in Telecommunications and Signal Processing)},
	year = 2006
}
@Book{ Asmussen,
	AUTHOR = "S. Asmussen",
	TITLE = "Applied probability and queues",
	SERIES = "",
	VOLUME = "",
	EDITION = "",
	NOTE = "",
	PUBLISHER = "Wiley",
	ADDRESS = "New York",
	YEAR = "1987",
	PAGES = "",
	ISBN = "",
	MRCLASS = "",
	MRNUMBER = "",
	MRREVIEWER = ""
}
@TECHREPORT{Food-contaminant,
	title = {A Storage Model with Random Release Rate for Modeling Exposure to Food Contaminants},
	author = {Bertail, Patrice and Cl\'{e}men\c{c}on, St\'{e}phan and Tressou, Jessica},
	year = {2006},
	institution = {Centre de Recherche en Economie et Statistique},
	type = {Working Papers},
	number = {2006-20},
	abstract = {This paper is devoted to present and study a specific continuoustimepiecewise-deterministic Markov process for describing the temporal evolution ofexposure to a given food contaminant. The quantity X of food contaminant presentin the body evolves through its accumulation after repeated dietary intakes on theone hand and the pharmacokinetics behavior of the chemical on the other hand. Inthe dynamic modeling considered here, the accumulation phenomenon is modeled bya simple marked point process with positive i.i.d. marks and elimination in betweenintakes occurs at a random linear rate ?X, randomness of the coefficient ? accountingfor the variability of the elimination process due to metabolic factors. Via embeddedchain analysis, ergodic properties of this extension of the standard compound Poissondam with (deterministic) linear release rate are investigated, the latter being of crucialimportance for describing the long-term behavior of the exposure process (Xt)t=0and assessing values of quantities such as the proportion of time the body burdenin contaminant is over a certain threshold. The exposure process being not directlyobservable, simulation-based statistical methods for estimating steady-state or timedependentquantities are also investigated by coupling analysis. Finally, applicationsto methylmercury contamination data are considered.},
}
@article {Workload-Modulated,
	author       = {Browne, Sid and Sigman, Karl},
	title        = {Work-modulated queues with applications to storage processes.},
	year         = {1992},
	journal      = {Journal of Applied Probability},
	volume       = {29},
	number       = {3},
	issn         = {0021-9002},
	pages        = {699-712},
	publisher    = {Appl. Probab. Trust, Sheffield},
	doi          = {10.2307/3214906},
	abstract     = {Summary: We study two FIFO single-server queueing models in which both the arrival and service processes are modulated by the amount of work in the system. In the first model, the $n$th customer's service time, $S\sb n$, depends upon their delay, $D\sb n$, in a general Markovian way and the arrival process is a non-stationary Poisson process (NSPP) modulated by work, that is, with an intensity that is a general deterministic function $g$ of work in system $V(t)$. Some examples are provided. In our second model, the arrivals once again form a work-modulated NSPP, but, each customer brings a job consisting of an amount of work to be processed that is i.i.d. and the service rate is a general deterministic function $r$ of work. This model can be viewed as a storage (dam) model [see {\it P. J. Brockwell, S. I. Resnick} and {\it R. L. Tweedie}, Adv. Appl. Probab. 14, 392-433 (1982; Zbl 0482.60087)], but, unlike previous related literature (where the input is assumed work-independent and stationary), we allow a work-modulated NSPP. Our approach involves an elementary use of Foster's criterion [via {\it R. L. Tweedie}, ibid. 8, 737-771 (1976; Zbl 0361.60014)] and in addition to obtaining new results, we obtain new and simplified proofs of stability for some known models. Using further criteria of Tweedie, we establish sufficient conditions for the steady- state distribution of customer delay and sojourn time to have finite moments.},
	identifier   = {00095996},
}
@article {Csiszar-Tsunady,
	author       = {I. CsiszÃ¡r and G. TusnÃ¡dy,},
	title        = {Information geometry and alternating minimization procedures.},
	year         = {1984},
	journal      = {Statistics and Decisions Supplement Issue, no. 1},
	volume       = {},
	number       = {},
	issn         = {},
	pages        = {205-237},
	publisher    = {Appl. Probab. Trust, Sheffield},
	doi          = {},
	abstract     = {}
}

@article {Mitran,
	author       = {P. Mitran},
	title        = {On Optimal Online Policies in Energy Harvesting Systems For Compound Poisson Energy Arrivals.},
	year         = {2012},
	journal      = {in Proc. IEEE International Symposium on Information Theory (ISIT), Cambridge, MA},
	volume       = {},
	number       = {},
	issn         = {},
	pages        = {960 - 964},
	publisher    = {},
	doi          = {},
	abstract     = {}
}

@article{Moran,
	author="P.A.P. Moran",
	title="A probability theory of dams and storage systems",
	journal="Aust. Jour. App. Sci.",
	ISSN="",
	publisher="",
	year="1954",
	month="",
	volume="5",
	number="",
	pages="116-124",
	URL="",
	DOI="",
}
}

@article{Gaver,
title={Limiting distributions for some storage problems},
author={Gaver, DP and Miller, RG},
journal={Studies in Applied Probability and Management Science},
pages={110-126},
year={1962},
publisher={Stanford, CA: Stanford University Press}
}

@article{Harrison,
title={The stationary distribution and first exit probabilities of a storage process with general release rule},
author={Harrison, J.M. and Resnick, S.I.},
journal={Mathematics of Operations Research},
volume={1},
number={4},
pages={347-358},
year={1976},
publisher={INFORMS}
}

@INPROCEEDINGS{Sharma,
AUTHOR =       {Sharma, V. and Mukherji, U. and Joseph, V.},
TITLE =        {Efficient energy management policies for networks with energy harvesting sensor nodes},
BOOKTITLE =    {Allerton Conf. On Communication, Control, and Computing},
YEAR =         {2008},
editor =       {},
volume =       {},
number =       {},
series =       {},
pages =        {375-383},
address =      {},
month =        sept,
organization = {},
publisher =    {},
note =         {},
abstract =     {},
keywords =     {},
}

@INPROCEEDINGS{Sharma2,
author={Sharma, V. and Mukherji, U. and Joseph, V.},
booktitle={Communication, Control, and Computing, 2008 46th Annual Allerton Conference on},
title={Efficient energy management policies for networks with energy harvesting sensor nodes},
year={2008},
month=sept,
volume={},
number={},
pages={375-383},
keywords={MAC policies;SNR regime;efficient energy management policies;energy harvesting sensor nodes;queue;random field;sensor networks;access protocols;queueing theory;telecommunication network management;wireless sensor networks;},
doi={10.1109/ALLERTON.2008.4797582},
ISSN={},}

@ARTICLE{Koksal,
author={Srivastava, R. and Koksal, C. E.},
journal={IEEE/ACM Transactions on Networking}, title={Basic Performance Limits and Tradeoffs in Energy-Harvesting Sensor Nodes With Finite Data and Energy Storage},
year={2012},
month={ },
OPTvolume={PP},
OPTnumber={99},
OPTpages={1},
keywords={Energy harvesting;performance analysis;queueing analysis;stochastic processes;},
doi={10.1109/TNET.2012.2218123},
ISSN={1063-6692}
}

@ARTICLE{Ozel,
author={Ozel, O. and Tutuncuoglu, K. and Jing Yang and Ulukus, S. and Yener, A.},
journal={IEEE Journal on Selected Areas in Communications}, title={Transmission with Energy Harvesting Nodes in Fading Wireless Channels: Optimal Policies},
year={2011},
month=sept,
volume={29},
number={8},
pages={1732 -1743},
keywords={adaptive directional water-filling algorithm;directional water-filling algorithm;energy arrival processes;energy harvesting nodes;energy harvesting transmitter;energy storage capacity;maximization problem;minimization problem;optimal policy;point-to-point data transmission;rechargeable nodes;stochastic dynamic programming;stochastic fading processes;wireless fading channel;energy harvesting;fading channels;stochastic programming;},
doi={10.1109/JSAC.2011.110921},
ISSN={0733-8716},}
@INPROCEEDINGS{Gurakan,
author={Gurakan, B. and Ozel, O. and Jing Yang and Ulukus, S.},
booktitle={Proc. IEEE International Symposium on Information Theory (ISIT), Cambridge, MA}, title={Energy cooperation in energy harvesting wireless communications},
year={2012},
month={july},
volume={},
number={},
pages={965-969},
keywords={Data communication;Energy exchange;Energy harvesting;Relays;Throughput;Transmitters;Wireless communication;convex programming;energy harvesting;energy management systems;radio transmitters;convex problem;destination node;end-to-end throughput;energy arrivals;energy cooperation;energy harvesting wireless communications;energy transfer unit;multihop communication scenario;optimal offline energy management policies;relay energy profile;relay node;source node;transmitter;},
doi={10.1109/ISIT.2012.6284706},
ISSN={2157-8095},}
@INPROCEEDINGS{Yang,
author={Jing Yang and Ulukus, S.},
booktitle={Proc. IEEE International Conference on Communications (ICC)}, title={Optimal Packet Scheduling in a Multiple Access Channel with Rechargeable Nodes},
year={2011},
month={june},
volume={},
number={},
pages={1-5},
keywords={convex optimization problem;energy arrival epochs;energy harvesting;generalized iterative backward waterfilling algorithm;minimization problem;multiple access channel;optimal packet scheduling;rechargeable nodes;transmission rates;two-user multiple access communication system;convex programming;energy harvesting;iterative methods;multi-access systems;scheduling;},
doi={10.1109/icc.2011.5963054},
ISSN={1550-3607},}
@ARTICLE{Ulukus,
author={Jing Yang and Ozel, O. and Ulukus, S.},
journal={IEEE Transactions on Wireless Communications}, title={Broadcasting with an Energy Harvesting Rechargeable Transmitter},
year={2012},
month={february },
volume={11},
number={2},
pages={571 -583},
keywords={M-user broadcast channel;additive white Gaussian noise broadcast channel;broadcast channel problem;cut-off power level;energy harvesting rechargeable transmitter;optimal single-user transmit power sequence;optimal transmission policy;power transmission rate;rechargeable battery;transmission completion time minimization problem;two-user broadcast channel;AWGN channels;broadcast channels;energy harvesting;radio transmitters;secondary cells;telecommunication power supplies;},
doi={10.1109/TWC.2011.120911.101813},
ISSN={1536-1276},}
@ARTICLE{Yener,
author={Tutuncuoglu, K. and Yener, A.},
journal={Journal of Communications and Networks}, title={Sum-rate optimal power policies for energy harvesting transmitters in an interference channel},
year={2012},
month={april },
volume={14},
number={2},
pages={151 -161},
keywords={Batteries;Energy harvesting;Interference channels;Iterative methods;Resource management;Transmitters;Gaussian channels;energy harvesting;government policies;iterative methods;radio transmitters;radiofrequency interference;battery powered wireless node;directional waterfilling interpretation;distributed near-optimal policy;energy harvesting transmitter;iterative coordinate descent method;short-term throughput maximization;stochastic data arrival;sum capacity;sum throughput maximization;sum-rate optimal power allocation policy;two-user Gaussian interference channel;Directional waterfilling;energy harvesting networks;generalized iterative waterfilling;interference channel;sum-throughput maximization;},
doi={10.1109/JCN.2012.6253063},
ISSN={1229-2370},}
@ARTICLE{Jing,
author={Jing Lei and Yates, R. and Greenstein, L.},
journal={IEEE Transactions on Wireless Communications}, title={A generic model for optimizing single-hop transmission policy of replenishable sensors},
year={2009},
month={feb. },
volume={8},
number={2},
pages={547 -551},
keywords={Markov chain model;energy harvesting;energy storage;generic mathematical framework;power management;sensor replenishment;single-hop transmission policy optimization;stochastic replenishment;wireless sensor network;Markov processes;wireless sensor networks;},
doi={10.1109/TWC.2009.070905},
ISSN={1536-1276},}
@INPROCEEDINGS{Ozel-Ulukus,
author={Ozel, O. and Ulukus, S.},
booktitle={IEEE 21st International Symposium on Personal, Indoor and Mobile Radio Communications Workshops (PIMRC Workshops)}, title={Information-theoretic analysis of an energy harvesting communication system},
year={2010},
month={sept.},
volume={},
number={},
pages={330 -335},
keywords={additive white Gaussian noise channel model;best-effort-transmit;energy harvesting communication system;exogenous recharge;information-theoretic analysis;maximum average throughput;offline power allocation;random energy arrivals;save-and-transmit;stochastic energy arrivals;AWGN channels;channel capacity;energy harvesting;transmitters;},
doi={10.1109/PIMRCW.2010.5670389},
ISSN={},}
@INPROCEEDINGS{Yang-Jing,
author={Jing Yang and Ulukus, S.},
booktitle={44th Annual Conference on Information Sciences and Systems (CISS)}, title={Transmission completion time minimization in an energy harvesting system},
year={2010},
month={march},
volume={},
number={},
pages={1 -6},
keywords={Absorption;Delay;Educational institutions;Minimization methods;Photovoltaic cells;Scheduling algorithm;Telecommunication traffic;Time factors;Transmitters;Wireless communication;radio networks;radio transmitters;scheduling;telecommunication traffic;data arrival;data packet;data traffic arrival;deterministic system;energy arrival;energy availability;optimal off-line scheduling policy;single-user energy harvesting wireless communication system;source node;traffic load;transmission completion time minimization;transmission rate;transmitter;},
doi={10.1109/CISS.2010.5464947},
ISSN={},}

@Book{PLinz,
AUTHOR = "P. Linz",
TITLE = "Analytical and Numerical Methods for Volterra Equations.",
PUBLISHER = "Society for Industrial and Applied Mathematics (SIAM)",
ADDRESS = "Philadelphia, PA.",
YEAR = "1985",
}
@article{Kivanc,
author = {Kivanc, D. and Li, Guoqing and Liu, Hui},
title = {Computationally efficient bandwidth allocation and power control for OFDMA},
journal = {Trans. Wireless. Comm.},
issue_date = {November 2003},
volume = {2},
number = {6},
month = nov,
year = {2003},
issn = {1536-1276},
pages = {1150--1158},
numpages = {9},
url = {http://dx.doi.org/10.1109/TWC.2003.819016},
doi = {10.1109/TWC.2003.819016},
acmid = {2216164},
publisher = {IEEE Press},
address = {Piscataway, NJ, USA},
}

@article{Sensor,
year={2008},
issn={1476-8186},
journal={International Journal of Automation and Computing},
volume={5},
issue={4},
doi={10.1007/s11633-008-0334-2},
title={Comparison of energy harvesting systems for wireless sensor networks},
url={},
publisher={Institute of Automation, Chinese Academy of Sciences},
keywords={Energy harvesting; energy scavenging; wireless sensor networks (WSNs); energy management},
author={Gilbert, JamesM. and Balouchi, Farooq},
pages={334-347},
language={English}
}
@article {Mazumdar,
author       = {Piera, Francisco J. and Mazumdar, Ravi R. and Guillemin, Fabrice M.},
title        = {Boundary behavior and product-form stationary distributions of jump diffusions in the orthant with state-dependent reflections.},
year         = {2008},
journal      = {Advances in Applied Probability},
volume       = {40},
number       = {2},
issn         = {0001-8678},
pages        = {529-547},
publisher    = {Applied Probability Trust, Sheffield},
doi          = {10.1239/aap/1214950215},
abstract     = {Summary: We consider reflected diffusions with positive and negative jumps, constrained to lie in the nonnegative orthant of $\Bbb R^n$. We allow for the drift and diffusion coefficients, as well as for the directions of reflection, to be random fields over time and space. We provide a boundary behavior characterization, generalizing known results in the nonrandom coefficients and constant directions of the reflection case. In particular, the regulator processes are related to semimartingale local times at the boundaries, and they are shown not to charge the times the process expends at the intersection of boundary faces. Using the boundary results, we extend the conditions for product-form distributions in the stationary regime to the case when the drift and diffusion coefficients, as well as the directions of reflection, are random fields over space.},
identifier   = {05306876},
}
@book{Ber99,
author       = {D. P. Bertsekas},
title        = {Nonlinear Programming},
publisher    = {Athena Scientific},
address      = {Belmont, MA},
year         = {1999}
}
@ARTICLE{capacityO,
author={Ozel, O. and Ulukus, S.},
journal={IEEE Transactions on Information Theory}, title={Achieving {AWGN} Capacity Under Stochastic Energy Harvesting},
year={2012},
month={oct. },
volume={58},
number={10},
pages={6471 -6483},
keywords={AWGN channel capacity;additive white Gaussian noise channel capacity;best-effort-transmit scheme;code symbol transmission;codewords;cumulative stochastic energy constraints;energy harvesting transmitter;exogenous recharge process;information-theoretic capacity;majorization theory;optimal of #x03B9; #x0301;ine power policy;save-and-transmit scheme;stochastic channel input constraints;stochastic energy harvesting communication systems;zero symbol;AWGN channels;channel capacity;energy harvesting;radio transmitters;stochastic processes;},
doi={10.1109/TIT.2012.2204389},
ISSN={0018-9448},}
@book{Mazumdar1,
added-at = {},
at = {},
author = {R. Mazumdar},
biburl = {},
howpublished = {},
id = {1877660},
interhash = {},
intrahash = {},
isbn = {},
keywords = {},
month = {January},
priority = {0},
publisher = {Morgan and Claypool},
timestamp = {},
title = {Performance Modeling, Loss Networks, and Statistical Multiplexing},
year = 2010
}
@ARTICLE{GG1,
author={Devillers, B. and G\"{u}nd\"{u}z, D.},
journal={Journal of Communications and Networks}, title={A general framework for the optimization of energy harvesting communication systems with battery imperfections},
year={2012},
month={april },
volume={14},
number={2},
pages={130 -139},
keywords={battery imperfections;battery-powered communication systems;constant leakage rate;energy harvesting communication systems;general framework;optimization;transmitter;broadcast channels;energy harvesting;secondary cells;telecommunication power supplies;},
doi={10.1109/JCN.2012.6253061},
ISSN={1229-2370}
}

@ARTICLE{Merton,
title = {Lifetime Portfolio Selection under Uncertainty: The Continuous-Time Case},
author = {Merton, Robert C.},
year = {1969},
journal = {The Review of Economics and Statistics},
volume = {51},
number = {3},
pages = {247-57},
}

\end{filecontents}

\end{document}